\providecommand{\U}[1]{\protect\rule{.1in}{.1in}}
\newtheorem{theorem}{Theorem}
\newtheorem{proposition}{Proposition}
\newenvironment{proof}[1][Proof]{\noindent\textbf{#1.} }{\ \rule{0.5em}{0.5em}}
\begin{document}

\title{Coherent Quantum Channel Discrimination}

%

\author{\IEEEauthorblockN{Mark M. Wilde}
\IEEEauthorblockA
{Hearne Institute for Theoretical Physics, Department of Physics and Astronomy\\Center for Computation and Technology, Louisiana State University\\Baton
Rouge, Louisiana 70803, USA,
Email: mwilde@lsu.edu}
}%
%

\maketitle
%

\begin{abstract}%

This paper introduces coherent quantum channel discrimination as a coherent
version of conventional quantum channel discrimination. Coherent channel
discrimination is phrased here as a quantum interactive proof system between a
verifier and a prover, wherein the goal of the prover is to distinguish two
channels called in superposition in order to distill a Bell state at the end.
The key measure considered here is the success probability of distilling a
Bell state, and I prove that this success probability does not increase under
the action of a quantum superchannel, thus establishing this measure as a
fundamental measure of channel distinguishability. Also, I establish some
bounds on this success probability in terms of the success probability of
conventional channel discrimination. Finally, I provide an explicit semi-definite
program that can compute the success probability.%

\end{abstract}%

\section{Introduction}

\label{sec:intro}

Quantum channel discrimination is a fundamental information-processing task in
quantum information theory
\cite{Kitaev1997,AKN98,CPR00,Acin01,RW05,GLN04,PhysRevA.71.062340,PhysRevA.72.014305}%
. There are at least two ways of thinking about it: one in terms of
quantifying error between an ideal channel and an experimental approximation
of it \cite{Kitaev1997,AKN98} and another in terms of symmetric hypothesis
testing \cite{RW05,GLN04}. In both scenarios, the diamond distance between
channels \cite{Kitaev1997} arises as the fundamental metric quantifying the
distinguishability of two quantum channels. These 
interpretations of diamond distance are the main reason that it is employed as
the primary theoretical quantifier of channel distance in applications such as
fault-tolerant quantum computation \cite{QECbook}, quantum complexity theory
\cite{VW16}, and quantum Shannon theory~\cite{KWerner04}.

To expand upon the first way of thinking about channel discrimination from
\cite{Kitaev1997,AKN98}, suppose that the ideal channel to be implemented is
$\mathcal{N}_{A\rightarrow B}^{0}$ (a completely positive, trace-preserving
map taking operators for a system$~A$ to operators for a system$~B$). Suppose
further that the experimental approximation is $\mathcal{N}_{A\rightarrow
B}^{1}$. To interface with these channels and obtain classical data, the most
general way for doing so is to prepare a state $\rho_{RA}$
of a reference system $R$ and the channel input system $A$, feed system~$A$
into the unknown channel, and then perform a quantum measurement
$\{\Lambda_{RB}^{x}\}_{x\in\mathcal{X}}$ on the channel output system $B$ and
the reference system $R$. To be a legitimate quantum measurement, the set
$\{\Lambda_{RB}^{x}\}_{x\in\mathcal{X}}$ of operators should satisfy
$\sum_{x\in\mathcal{X}}\Lambda_{RB}^{x}=I_{RB}$ and $\Lambda_{RB}^{x}\geq0$
for all $x\in\mathcal{X}$. The result of this procedure (preparation, channel
evolution, and measurement) is a classical outcome $x\in\mathcal{X}$ that
occurs with probability $\operatorname{Tr}[\Lambda_{RB}^{x}\mathcal{N}%
_{A\rightarrow B}^{0}(\rho_{RA})]$ if the channel $\mathcal{N}_{A\rightarrow
B}^{0}$ is applied, while the outcome $x\in\mathcal{X}$ occurs with
probability $\operatorname{Tr}[\Lambda_{RB}^{x}\mathcal{N}_{A\rightarrow
B}^{1}(\rho_{RA})]$ if the channel $\mathcal{N}_{A\rightarrow B}^{1}$ is
applied. The error or difference between these probabilities is naturally
quantified by the absolute deviation%
\begin{equation}
\left\vert \operatorname{Tr}[\Lambda_{RB}^{x}\mathcal{N}_{A\rightarrow B}%
^{0}(\rho_{RA})]-\operatorname{Tr}[\Lambda_{RB}^{x}\mathcal{N}_{A\rightarrow
B}^{1}(\rho_{RA})]\right\vert . \label{eq:ch-error-proc-depend}%
\end{equation}
We can then quantify the maximum possible error
between the channels $\mathcal{N}_{A\rightarrow B}^{0}$ and $\mathcal{N}%
_{A\rightarrow B}^{1}$ by optimizing \eqref{eq:ch-error-proc-depend} with
respect to all preparations and measurements:%
\begin{multline}
\sup_{\rho_{RA},\{\Lambda_{RB}^{x}\}_{x}}\left\vert \operatorname{Tr}%
[\Lambda_{RB}^{x}\mathcal{N}^{0}(\rho_{RA})]-\operatorname{Tr}[\Lambda
_{RB}^{x}\mathcal{N}^{1}(\rho_{RA})]\right\vert \label{eq:ch-error}\\
=\sup_{\substack{\rho_{RA},\\0\leq\Lambda_{RB}\leq I_{RB}}}\left\vert
\operatorname{Tr}[\Lambda_{RB}\mathcal{N}^{0}(\rho_{RA})]-\operatorname{Tr}%
[\Lambda_{RB}\mathcal{N}^{1}(\rho_{RA})]\right\vert ,
\end{multline}
where it is implicit that the channels $\mathcal{N}^{0}$ and $\mathcal{N}^{1}$
above have input system $A$ and output system $B$. Mathematically, this has
the effect of removing the dependence on the preparation and measurement such
that the error is a function solely of the two channels $\mathcal{N}%
_{A\rightarrow B}^{0}$ and $\mathcal{N}_{A\rightarrow B}^{1}$. It is a
fundamental and well known result in quantum information theory
\cite{Kitaev1997,AKN98}\ that the error in \eqref{eq:ch-error} is equal to the
normalized diamond distance:%
\begin{equation}
\text{Eq.}~\eqref{eq:ch-error}=\frac{1}{2}\left\Vert \mathcal{N}%
^{0}-\mathcal{N}^{1}\right\Vert _{\diamond},
\end{equation}
where the diamond distance $\left\Vert \mathcal{N}^{0}-\mathcal{N}%
^{1}\right\Vert _{\diamond}$ is defined as%
\begin{equation}
\left\Vert \mathcal{N}^{0}-\mathcal{N}^{1}\right\Vert _{\diamond}:=\sup
_{\psi_{RA}}\left\Vert \mathcal{N}_{A\rightarrow B}^{0}(\psi_{RA}%
)-\mathcal{N}_{A\rightarrow B}^{1}(\psi_{RA})\right\Vert _{1}.
\label{eq:def-diamond-distance}%
\end{equation}
In \eqref{eq:def-diamond-distance}, the optimization is with respect to all
pure bipartite states $\psi_{RA}$ with system $R$ isomorphic to the channel
input system $A$, and the trace norm of an operator $X$ is given by
$\left\Vert X\right\Vert _{1}=\operatorname{Tr}[\left\vert X\right\vert ]$,
where $\left\vert X\right\vert :=\sqrt{X^{\dag}X}$. This interpretation of
normalized diamond distance as error between channels is the main reason that
it is employed in applications like fault-tolerant quantum computation
\cite{QECbook}.

The other setting in which diamond distance arises is in the context of
symmetric hypothesis testing of quantum channels~\cite{RW05,GLN04}. We can
also refer to this as \textquotedblleft incoherent quantum channel
discrimination,\textquotedblright\ a name that shall become clear later. This
can be thought of as a guessing game between a prover and a verifier
\cite{RW05,R09}, and here we describe the game with fully quantum-mechanical
notation. Let us call it the \textquotedblleft channel guessing
game.\textquotedblright\ The game begins with the verifier flipping a fair
coin described by the state $\frac{1}{2}\left(  |0\rangle\langle0|_{R_{1}%
}+|1\rangle\langle1|_{R_{1}}\right)  $. Meanwhile the prover prepares a pure
state $\psi_{RA}$ and sends system $A$ to the verifier. The verifier then
performs the conditional channel%
\begin{equation}
\sum_{i\in\left\{  0,1\right\}  }|i\rangle\langle i|_{R_{1}}(\cdot
)|i\rangle\langle i|_{R_{1}}\otimes\mathcal{N}_{A\rightarrow B}^{i}(\cdot)
\label{eq:cond-chan}%
\end{equation}
on systems $R_{1}$ and $A$, so that the resulting global state is $\frac{1}%
{2}\sum_{i\in\left\{  0,1\right\}  }|i\rangle\langle i|_{R_{1}}\otimes
\mathcal{N}_{A\rightarrow B}^{i}(\psi_{RA})$. The verifier sends the channel
output system $B$ to the prover, whose task it is to guess which channel was
applied by the verifier. The prover can act on the systems in his possession,
which are $R$ and $B$. The prover performs a quantum-to-classical channel
$\sum_{j\in\left\{  0,1\right\}  }\operatorname{Tr}[\Lambda_{RB}^{j}%
(\cdot)]|j\rangle\langle j|_{F}$, where $\Lambda_{RB}^{j}\geq0$ for
$j\in\left\{  0,1\right\}  $ and $\sum_{j\in\left\{  0,1\right\}  }%
\Lambda_{RB}^{j}=I_{RB}$, and sends the system $F$ back to the verifier.
Finally, the verifier performs the measurement%
\begin{equation}
\left\{  |00\rangle\langle00|_{R_{1}F}+|11\rangle\langle11|_{R_{1}%
F},|01\rangle\langle01|_{R_{1}F}+|10\rangle\langle10|_{R_{1}F}\right\}
\label{eq:final-meas-incoh-ch-disc}%
\end{equation}
and declares \textquotedblleft success\textquotedblright\ if the first outcome
of the measurement occurs. If success occurs, we interpret this outcome as
meaning that the prover is able to distinguish the channels. Running through
the calculation, the probability that the prover wins (verifier declares
\textquotedblleft success\textquotedblright) is equal to%
\begin{equation}
\frac{1}{2}\sum_{i\in\left\{  0,1\right\}  }\operatorname{Tr}[\Lambda_{RB}%
^{i}\mathcal{N}_{A\rightarrow B}^{i}(\psi_{RA})].
\end{equation}
Figure~\ref{fig:q-ch-disc}\ depicts the channel guessing game (in order to
understand it fully, it is necessary to read the next
section).\begin{figure}[ptb]
\begin{center}
\includegraphics[
width=3.3399in
]{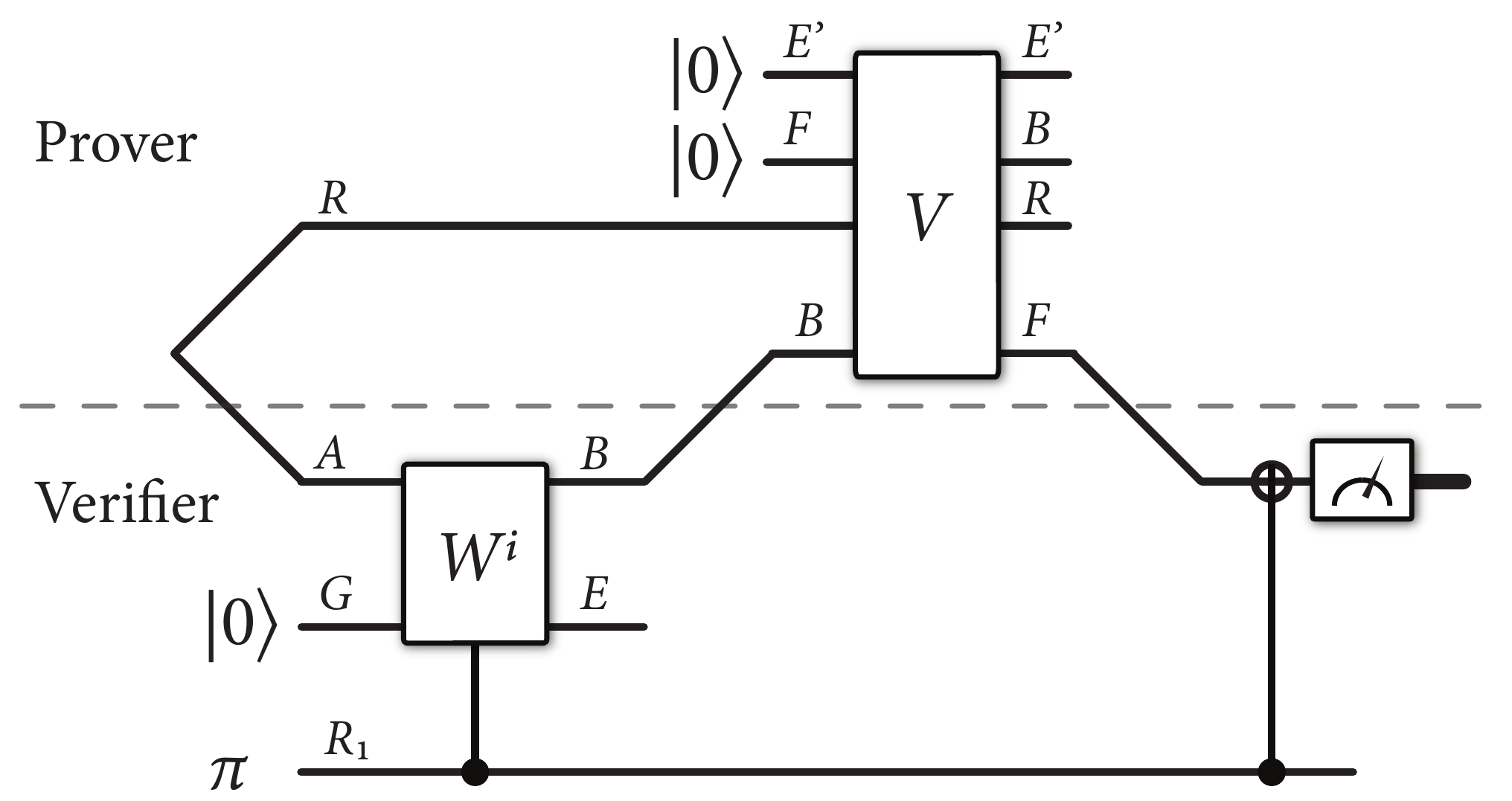}
\end{center}
\caption{In quantum channel discrimination, the prover prepares a pure state
$|\psi\rangle_{RA}$ and the verifier a mixed state $\pi:=\left(
|0\rangle\langle0|+|1\rangle\langle1|\right)  /2$. The verifier performs the
controlled unitary in \eqref{eq:controlled-unitary}\ that implements the
conditional channel in \eqref{eq:cond-chan}. The prover acts on the channel
output system $B$ and the reference system $R$ and sends back a single bit.
The final controlled-NOT\ and computational basis measurement implement the
measurement in \eqref{eq:final-meas-incoh-ch-disc}.}%
\label{fig:q-ch-disc}%
\end{figure}

The prover can optimize over all input states $\psi_{RA}$ and measurements
$\{\Lambda_{RB}^{j}\}_{j\in\left\{  0,1\right\}  }$, and a well known result
in quantum information \cite{RW05} is that the optimal success probability of
incoherent channel discrimination is given by%
\begin{align}
p_{s}^{\text{inc}}(\mathcal{N}^{0},\mathcal{N}^{1})  &  :=\sup_{\substack{\psi
_{RA},\\\{\Lambda_{RB}^{j}\}_{j\in\left\{  0,1\right\}  }}}\frac{1}{2}%
\sum_{i\in\left\{  0,1\right\}  }\operatorname{Tr}[\Lambda_{RB}^{i}%
\mathcal{N}^{i}(\psi_{RA})]\nonumber\\
&  =\frac{1}{2}\left(  1+\frac{1}{2}\left\Vert \mathcal{N}^{0}-\mathcal{N}%
^{1}\right\Vert _{\diamond}\right)  , \label{eq:incoh-succ-prob-ch-disc}%
\end{align}
thus endowing the normalized diamond distance $\frac{1}{2}\left\Vert
\mathcal{N}^{0}-\mathcal{N}^{1}\right\Vert _{\diamond}$ with another
operational meaning as the relative bias away from random guessing in a
channel guessing game of the above form. That is, a random guessing strategy
leads to a success probability of $1/2$ and can be employed when the channels
are the same or indistinguishable. However, when the channels have some
distinguishability so that $\frac{1}{2}\left\Vert \mathcal{N}^{0}%
-\mathcal{N}^{1}\right\Vert _{\diamond}\in(0,1]$, then the success probability
changes as a linear function of the normalized diamond distance and reaches
its peak value when the channels are orthogonal to each other (perfectly
distinguishable). This guessing game is a basic channel discrimination task in
quantum information theory and has found application in the setting of quantum
illumination \cite{PhysRevA.71.062340,PhysRevA.72.014305,L08}.

A useful fact about the diamond distance is that it can be computed by means
of a semi-definite program \cite{Wat09}:%
\begin{equation}
\frac{1}{2}\left\Vert \mathcal{N}^{0}-\mathcal{N}^{1}\right\Vert _{\diamond
}=\inf_{\mu,Z_{RB}\geq0}\left\{
\begin{array}
[c]{c}%
\mu:Z_{RB}\geq\Gamma_{RB}^{\mathcal{N}^{0}}-\Gamma_{RB}^{\mathcal{N}^{1}},\\
\mu I_{R}\geq\operatorname{Tr}_{B}[Z_{RB}]
\end{array}
\right\}  ,
\end{equation}
where $\Gamma_{RB}^{\mathcal{N}}:=\mathcal{N}_{A\rightarrow B}(\Gamma_{RA})$
is the Choi operator of the channel $\mathcal{N}_{A\rightarrow B}$, with
$\Gamma_{RA}:=|\Gamma\rangle\langle\Gamma|_{RA}$ and $|\Gamma\rangle
_{RA}:=\sum_{i}|i\rangle_{R}|i\rangle_{A}$, for orthonormal bases
$\{|i\rangle_{R}\}_{i}$ and $\{|i\rangle_{A}\}_{i}$. Thus, calculating the
diamond distance is efficient in the dimensions of the input $A$ and output
$B$.

\section{Coherent Quantum\ Channel\ Discrimination}

The main aim of the present paper is to introduce and analyze a fully quantum
or coherent version of the channel guessing game presented above. Let us call
it \textit{coherent quantum channel discrimination}, in contrast to the
incoherent channel discrimination task presented above. The primary
modification that I make to it is to replace all classical steps of the
verifier with their coherent counterparts, much like what was done previously
in \cite{prl2004harrow} to produce coherent versions of basic protocols in
quantum information such as superdense coding and teleportation (see also
\cite{Har09} in this context). The resulting protocol is related to the fully
quantum reading protocol from \cite{DBW17}. A recent series of works have
considered coherent control of quantum channels
\cite{AWHMB18,GRB19,PDEBL19,DNSM19}, but coherent quantum channel
discrimination is different from the protocols considered in these prior works.

I now briefly summarize coherent channel discrimination. The main idea is to
replace the initial state of the verifier with $|+\rangle_{R_{1}}:=\left(
|0\rangle_{R_{1}}+|1\rangle_{R_{1}}\right)  /\sqrt{2}$, the conditional
channel of the verifier with a controlled unitary, and the final measurement
with a projection onto the Bell state $|\Phi\rangle_{R_{1}F}:=\left(
|00\rangle_{R_{1}F}+|11\rangle_{R_{1}F}\right)  /\sqrt{2}$ (here and
throughout the rest of the paper, we refer to both state vectors and density
operators as states, as is conventional in the quantum information
literature). Later, we shall see that it is sensible to include an uncomputing
step to uncompute the controlled channel at the end before performing the Bell projection.

The modifications of the guessing game presented here could potentially have
applications in quantum computation, where gates are often promoted to
controlled gates and used in superposition. In particular, some works have
recently investigated the question of compiling quantum circuits on quantum
computers \cite{Khatri2019quantumassisted,SKCC19}. The coherent games
presented here could be used as benchmarks to assess how well an approximate
implementation of a circuit could be used instead of the ideal one, even when
it is employed in superposition (i.e., in controlled form).\ We do not
investigate this particular application here but instead leave it for future work.

Before presenting details of the coherent version of the channel guessing
game, let us recall some fundamental facts about quantum channels (see, e.g.,
\cite{W17book}). First, every quantum channel $\mathcal{N}_{A\rightarrow B}$
has a Kraus representation as $\mathcal{N}_{A\rightarrow B}(\rho_{A})=\sum
_{j}N^{j}\rho N^{j\dag}$, where $\left\{  N^{j}\right\}  _{j}$ is a set of
Kraus operators satisfying $\sum_{j}N^{j\dag}N^{j}=I_{A}$. Another fundamental
fact is that every quantum channel $\mathcal{N}_{A\rightarrow B}$ has an
isometric extension. That is, to every quantum channel $\mathcal{N}%
_{A\rightarrow B}$, there exists an isometry $U_{A\rightarrow BE}$ (satisfying
$U^{\dag}U=I_{A}$) such that $\mathcal{N}_{A\rightarrow B}(\rho_{A}%
)=\operatorname{Tr}_{E}[U_{A\rightarrow BE}(\rho_{A})(U_{A\rightarrow
BE})^{\dag}]$ for all input states $\rho_{A}$. Equivalently, there exists an
environment system $G$ and a unitary $W_{AG\rightarrow BE}$ such that%
\begin{equation}
\mathcal{N}_{A\rightarrow B}(\rho_{A})=\operatorname{Tr}_{E}[W_{AG\rightarrow
BE}(\rho_{A}\otimes|0\rangle\langle0|_{G})(W_{AG\rightarrow BE})^{\dag}].
\label{eq:unitary-extend}%
\end{equation}
Thus, we can set $U_{A\rightarrow BE}=W_{AG\rightarrow BE}|0\rangle_{G}$. Any
two isometric extensions of the original channel are related by an isometry
acting on the environment system $E$.

The coherent version of the channel guessing game proceeds as follows. The
verifier prepares the state $|+\rangle_{R_{1}}$ and the prover prepares
$|\psi\rangle_{RA}$. The prover sends the system $A$ to the verifier. The
verifier then adjoins the state $|0\rangle_{G}$ and performs the controlled
unitary%
\begin{equation}
\sum_{i\in\left\{  0,1\right\}  }|i\rangle\langle i|_{R_{1}}\otimes
W_{AG\rightarrow BE}^{i}, \label{eq:controlled-unitary}%
\end{equation}
where $W_{AG\rightarrow BE}^{i}$ is a unitary that extends the channel
$\mathcal{N}_{A\rightarrow B}^{i}$ as in \eqref{eq:unitary-extend}. Let
$U_{A\rightarrow BE}^{i}:=W_{AG\rightarrow BE}^{i}|0\rangle_{G}$ be the
corresponding isometric extension. The resulting state is then%
\begin{equation}
\frac{1}{\sqrt{2}}\sum_{i\in\left\{  0,1\right\}  }|i\rangle_{R_{1}%
}W_{AG\rightarrow BE}^{i}|\psi\rangle_{RA}|0\rangle_{G}.
\label{eq:state-verifier-after-1st-unitary}%
\end{equation}
The verifier transmits system $B$ to the prover, who then adjoins an
environment system $E^{\prime}$ in the state $|0\rangle_{E^{\prime}}$, a qubit
system $F$ in the state $|0\rangle_{F}$, and performs a unitary
$V_{RBE^{\prime}F}$. The resulting state is then%
\begin{equation}
\frac{1}{\sqrt{2}}\sum_{i\in\left\{  0,1\right\}  }|i\rangle_{R_{1}%
}V_{RBE^{\prime}F}W_{AG\rightarrow BE}^{i}|\psi\rangle_{RA}|000\rangle
_{GE^{\prime}F}.
\end{equation}
The prover sends systems $B$ and $F$ back to the verifier, who uncomputes the
controlled unitary in \eqref{eq:controlled-unitary} by performing%
\begin{equation}
\sum_{i\in\left\{  0,1\right\}  }|i\rangle\langle i|_{R_{1}}\otimes
W_{AG\rightarrow BE}^{i\dag}. \label{eq:inverse-unitary}%
\end{equation}
The state at this point is then%
\begin{equation}
\frac{1}{\sqrt{2}}\sum_{i\in\left\{  0,1\right\}  }|i\rangle_{R_{1}}W^{i\dag
}VW^{i}|\psi\rangle_{RA}|000\rangle_{GE^{\prime}F},
\end{equation}
where we omit system labels for brevity. The verifier finally performs the
measurement%
\begin{equation}
\left\{  \Phi_{R_{1}F}\otimes|0\rangle\langle0|_{G},I_{R_{1}FG}-\Phi_{R_{1}%
F}\otimes|0\rangle\langle0|_{G}\right\}  \label{eq:verifier-final-proj-meas}%
\end{equation}
on systems $R_{1}FG$, where $\Phi_{R_{1}F}\equiv|\Phi\rangle\langle
\Phi|_{R_{1}F}$, and declares \textquotedblleft success\textquotedblright\ (or
\textquotedblleft prover wins!\textquotedblright)\ if the first outcome
occurs. The probability of success is equal to%
\begin{multline}
p_{s}^{\text{coh}}(\mathcal{N}^{0},\mathcal{N}^{1},|\psi\rangle_{RA}%
,V_{RBE^{\prime}F})\label{eq:unopt-succ-prob}\\
:=\frac{1}{2}\left\Vert \langle\Phi|_{R_{1}F}\langle0|_{G}\sum_{i\in\left\{
0,1\right\}  }|i\rangle_{R_{1}}W^{i\dag}VW^{i}|\psi\rangle_{RA}|000\rangle
_{GE^{\prime}F}\right\Vert _{2}^{2}\\
=\frac{1}{2}\left\Vert \langle\Phi|_{R_{1}F}\sum_{i\in\left\{  0,1\right\}
}|i\rangle_{R_{1}}U^{i\dag}VU^{i}|\psi\rangle_{RA}|00\rangle_{E^{\prime}%
F}\right\Vert _{2}^{2},
\end{multline}
where the second expression follows from the fact that $U_{A\rightarrow
BE}^{i}=W_{AG\rightarrow BE}^{i}|0\rangle_{G}$. Figure~\ref{fig:coh-q-ch-disc}%
\ depicts coherent quantum channel discrimination.\begin{figure}[ptb]
\begin{center}
\includegraphics[
width=3.339in
]{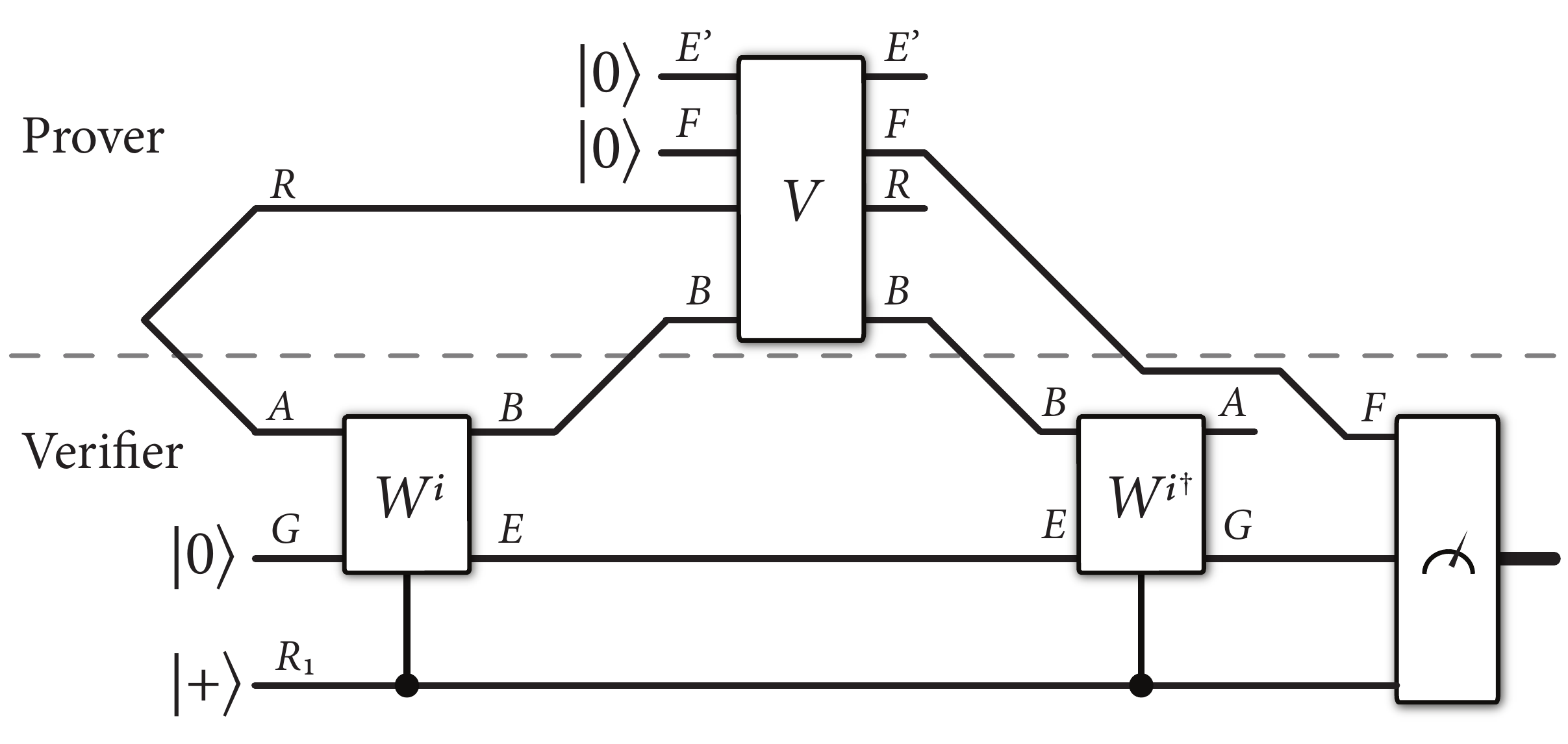}
\end{center}
\caption{In coherent quantum channel discrimination, the prover prepares a
pure state $|\psi\rangle_{RA}$ and the verifier the state $|+\rangle$. The
verifier performs the controlled unitary in \eqref{eq:controlled-unitary}. The
prover acts on the channel output system $B$ and reference system $R$ and
sends $B$ back along with a single qubit. The verifier uncomputes the
controlled unitary and finally implements the measurement in
\eqref{eq:verifier-final-proj-meas}.}%
\label{fig:coh-q-ch-disc}%
\end{figure}

We can already observe that the success probability in
\eqref{eq:unopt-succ-prob} is independent of the particular isometric
extension $U^{i}$ of the original channel $\mathcal{N}_{A\rightarrow B}^{i}$
for both $i=0$ and $i=1$. It is thus solely a function of the channels
$\mathcal{N}_{A\rightarrow B}^{0}$ and $\mathcal{N}_{A\rightarrow B}^{1}$, as
well as the particular strategy $\left\{  |\psi\rangle_{RA},V_{RBE^{\prime}%
F}\right\}  $ of the prover (as indicated by the notation in
\eqref{eq:unopt-succ-prob}). This follows because the unitary $V_{RBE^{\prime
}F}$ that the prover performs does not act on the environment system $E$.
Thus, letting $\widetilde{U}^{i}$ be some other isometric extension of
$\mathcal{N}_{A\rightarrow B}^{i}$, it follows that $\widetilde{U}^{i\dag
}V\widetilde{U}^{i}=U^{i\dag}VU^{i}$ by employing the previously stated fact
that there exists an isometry $T_{E^{\prime}}$ (satisfying $T^{\dag
}T=I_{E^{\prime}}$) such that $\widetilde{U}^{i}=T_{E^{\prime}}U^{i}$.

Just as in the guessing game presented in Section~\ref{sec:intro}, the prover
can optimize the success probability in \eqref{eq:unopt-succ-prob} with
respect to all possible strategies $\left\{  |\psi\rangle_{RA},V_{RBE^{\prime
}F}\right\}  $. Let us denote the resulting success probability as follows:%
\begin{equation}
p_{s}^{\text{coh}}(\mathcal{N}^{0},\mathcal{N}^{1}):=\sup_{|\psi\rangle
,V}p_{s}^{\text{coh}}(\mathcal{N}^{0},\mathcal{N}^{1},|\psi\rangle
_{RA},V_{RBE^{\prime}F}). \label{eq:succ-prob-coh}%
\end{equation}
The main goal of this paper is to understand this quantity in more detail and
relate it to the success probability in other forms of channel discrimination.

\section{Example}

As a very simple example to demonstrate the task of coherent channel
discrimination, suppose that the first channel $\mathcal{N}^{0}$ is the
identity channel and the second $\mathcal{N}^{1}$ is the deterministic
bit-flip channel, i.e., $\mathcal{N}^{1}(\cdot)=X(\cdot)X^{\dag}$, where $X=%
\begin{bmatrix}
0 & 1\\
1 & 0
\end{bmatrix}
$ is the Pauli flip operator. These channels are orthogonal to each other, and
a simple strategy for distinguishing them perfectly in incoherent channel
discrimination is to input the state $|0\rangle$ and perform a computational
basis measurement $\{|0\rangle\langle0|,|1\rangle\langle1|\}$. If the first
channel is applied, the output state is $|0\rangle$, while if the second
channel is applied, then the output state is $|1\rangle$, and these two states
are perfectly distinguishable.

For coherent channel discrimination, the same input state is optimal. To see
this, consider that the initial state of the verifier and prover's systems is
$|+\rangle_{R_{1}}|0\rangle_{A}$ (there is no reference system $R$ needed in
this case). The controlled unitary in \eqref{eq:controlled-unitary},
implemented by the verifier, is then a controlled-NOT gate $|0\rangle
\langle0|\otimes I+|1\rangle\langle1|\otimes X$, and there is no environment
system $E$ because the channels are unitary channels. The resulting state
after the controlled unitary is $|\Phi\rangle_{R_{1}B}$. The prover can then
perform a controlled-NOT\ gate from system~$B$ to system $F$, and the
resulting state is a GHZ\ state: $(|000\rangle_{R_{1}BF}+|111\rangle_{R_{1}%
BF})/\sqrt{2}$. The verifier then performs the inverse of the controlled-NOT
gate (itself a controlled-NOT), and the resulting state is $|\Phi
\rangle_{R_{1}F}|0\rangle_{B}$, so that the Bell projection at the end
succeeds with probability one; we thus arrive at the sensible conclusion that
these channels are perfectly distinguishable in coherent channel discrimination.

This key example illustrates the necessity and sensibility of the uncomputing
step in coherent channel discrimination. Without it, in this example, the
final Bell projection would succeed only with probability $1/2$, leading to
the unreasonable conclusion that these channels would not be perfectly
distinguishable in coherent channel discrimination. Uncomputing is commonly
employed in reversible and quantum computation as a \textquotedblleft
clean-up\textquotedblright\ step \cite{B73,B89,book2000mikeandike}, and it
serves the same purpose here.

\section{Results}

All proofs of the ensuing results appear in appendices.

\subsection{Alternate expression}

\begin{proposition}
\label{prop:coh-ch-disc-suc-prob-derive}For quantum channels $\mathcal{N}%
_{A\rightarrow B}^{0}$ and $\mathcal{N}_{A\rightarrow B}^{1}$, the success
probability in \eqref{eq:succ-prob-coh} is equal to%
\begin{multline}
p_{s}^{\text{coh}}(\mathcal{N}^{0},\mathcal{N}^{1}%
)=\label{eq:coh-ch-disc-suc-prob-derive}\\
\left[  \frac{1}{2}\sup_{\substack{\left\{  P^{i}\right\}  _{i\in\left\{
0,1\right\}  }:\\\sum_{i}P^{i\dag}P^{i}=I_{RB}}}\left\Vert \sum_{i\in\left\{
0,1\right\}  }(\mathcal{N}_{A\rightarrow B}^{i})^{\dag}(P_{RB\rightarrow
RBE^{\prime}}^{i})\right\Vert _{\infty}\right]  ^{2}.
\end{multline}
The operators $P_{RB\rightarrow RBE^{\prime}}^{i}$ act on the Hilbert space
for $RB$ and take them to the Hilbert space for $RBE^{\prime}$. The dimension
of $E^{\prime}$ need not be any larger than $2\left\vert A\right\vert
\left\vert B\right\vert ^{2}$.
\end{proposition}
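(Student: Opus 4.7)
The plan is to evaluate the Bell projection in \eqref{eq:unopt-succ-prob} explicitly, reparametrize the prover's unitary $V$ via two operators $P^i$, and recognize the resulting combination as an adjoint channel acting on $P^i$.

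First, I would use $\langle\Phi|_{R_1F}=\tfrac{1}{\sqrt{2}}\sum_j\langle j|_{R_1}\langle j|_F$ to collapse the sum over $R_1$, obtaining
\begin{equation*}
\langle\Phi|_{R_1F}\sum_i|i\rangle_{R_1}U^{i\dag}VU^i|\psi\rangle_{RA}|00\rangle_{E'F} = \tfrac{1}{\sqrt{2}}\sum_i\langle i|_FU^{i\dag}VU^i|\psi\rangle_{RA}|00\rangle_{E'F}.
\end{equation*}
Since $U^{i\dag}$ and $U^i$ do not touch $F$, I can move $\langle i|_F$ past $U^{i\dag}$ and combine it with $V$ and the incoming $|0\rangle_F$, defining $P^i:=\langle i|_FV|00\rangle_{E'F}$ as a linear map $RB\to RBE'$. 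Unitarity of $V$ then forces $\sum_iP^{i\dag}P^i=\langle 00|_{E'F}V^\dag V|00\rangle_{E'F}=I_{RB}$, which is precisely the constraint appearing in \eqref{eq:coh-ch-disc-suc-prob-derive}.

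The key algebraic identity I would then verify is $U^{i\dag}(P^i\otimes I_E)U^i=(\mathcal{N}^i)^\dag(P^i)$, regarded as an operator $RA\to RAE'$, where $(\mathcal{N}^i)^\dag$ acts on the $B$ factor of $P^i$. This follows by expanding $U^i=\sum_k N^{i,k}\otimes|k\rangle_E$ in Kraus form and computing directly; it is the standard Stinespring expression for the adjoint channel applied to an operator carrying an extra output system. Substituting this identity back in gives
\begin{equation*}
p_s^{\text{coh}}(\mathcal{N}^0,\mathcal{N}^1,|\psi\rangle,V) = \tfrac{1}{4}\,\bigl\|\textstyle\sum_i(\mathcal{N}^i)^\dag(P^i)\,|\psi\rangle_{RA}\bigr\|_2^2,
\end{equation*}
so the success probability for a fixed strategy has the desired form up to the two remaining suprema.

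I would then perform the two suprema separately. The sup over unit vectors $|\psi\rangle_{RA}$ produces the operator norm $\|M\|_\infty$ for $M:=\sum_i(\mathcal{N}^i)^\dag(P^i)$, which gives exactly the factor $\tfrac{1}{2}\|M\|_\infty$ inside the brackets after taking a square root. The sup over unitaries $V$ on $RBE'F$ corresponds via a bijection to a sup over tuples $\{P^i\}$ with $\sum_iP^{i\dag}P^i=I_{RB}$: the forward direction was shown above, and conversely, given any such tuple, the map $J:=\sum_i|i\rangle_FP^i\colon RB\to RBE'F$ is an isometry by the constraint and extends to a unitary $V$ on $RBE'F$ acting as $V|00\rangle_{E'F}|\phi\rangle_{RB}=J|\phi\rangle_{RB}$ for every $|\phi\rangle_{RB}$.

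I expect the main obstacle to be the dimension bound $|E'|\leq 2|A||B|^2$. I would handle it by a compression argument: since $\|(I_{RA}\otimes T)M\|_\infty=\|M\|_\infty$ for any isometry $T$ on $E'$, and since replacing $P^i$ by $(I_{RB}\otimes T)P^i$ preserves both the constraint $\sum_iP^{i\dag}P^i=I_{RB}$ and the objective, one can restrict $E'$ to carry only the joint range of the two operators $(\mathcal{N}^i)^\dag(P^i)$. Combining this compression with $|R|\leq|A|$ (from a Schmidt decomposition of $|\psi\rangle_{RA}$) and the Kraus-rank bound on each $\mathcal{N}^i$ then yields the claimed cap; pinning down the exact constant $2|A||B|^2$ is the most delicate part of the bookkeeping.
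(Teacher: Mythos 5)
Your proposal is correct and follows essentially the same route as the paper's proof: collapsing the Bell projection, extracting $P^{i}=\langle i|_{F}V|00\rangle_{E^{\prime}F}$ (the paper writes this via the block decomposition $V=\sum_{j,k}Q^{j,k}\otimes|j\rangle\langle k|_{F}$, so $P^{i}=Q^{i,0}|0\rangle_{E^{\prime}}$, which is the same operator), using the Stinespring form $U^{i\dag}(\cdot\otimes I_{E})U^{i}=(\mathcal{N}^{i})^{\dag}(\cdot)$ of the adjoint, and completing any constrained tuple $\{P^{i}\}$ back to a unitary. The only divergence is the dimension bound, where the paper simply invokes the input-times-output bound on the Stinespring environment of the channel $RB\to BF$ implemented by $V$, whereas your isometry-compression argument also works (and in fact yields $|E^{\prime}|\leq 2|A||B|$, within the claimed cap, without any Kraus-rank bookkeeping).
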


In the above, the $\infty$-norm of an operator $X$ is defined as $\left\Vert
X\right\Vert _{\infty}=\sup_{|\varphi\rangle\neq0}\frac{\left\Vert
X|\varphi\rangle\right\Vert _{2}}{\left\Vert |\varphi\rangle\right\Vert _{2}}$
and the adjoint $\mathcal{N}^{\dag}$ of a quantum channel $\mathcal{N}$ is
defined to be the unique linear map satisfying $\operatorname{Tr}%
[[\mathcal{N}^{\dag}(Y)]^{\dag}X]=\operatorname{Tr}[Y^{\dag}\mathcal{N}(X)]$
for all operators $X$ and $Y$.

It is interesting to contrast the expression in
\eqref{eq:coh-ch-disc-suc-prob-derive} with the following expression for the
success probability of incoherent channel discrimination:%
\begin{equation}
p_{s}^{\text{inc}}(\mathcal{N}^{0},\mathcal{N}^{1})=\frac{1}{2}\sup_{\left\{
\Lambda_{RB}^{i}\right\}  _{i\in\left\{  0,1\right\}  }}\left\Vert \sum
_{i\in\left\{  0,1\right\}  }(\mathcal{N}_{A\rightarrow B}^{i})^{\dag}%
(\Lambda_{RB}^{i})\right\Vert _{\infty}, \label{eq:incoh-suc-prob-compare}%
\end{equation}
where $\sum_{i\in\left\{  0,1\right\}  }\Lambda_{RB}^{i}=I_{RB}$ and
$\Lambda_{RB}^{i}\geq0$. This expression comes about from that in
\eqref{eq:incoh-succ-prob-ch-disc}\ by employing the definition of the
$\infty$-norm and the adjoint of a quantum channel. Even by examining these
expressions, we can see how \eqref{eq:coh-ch-disc-suc-prob-derive} is a
coherent version of \eqref{eq:incoh-suc-prob-compare}. The expression in
\eqref{eq:coh-ch-disc-suc-prob-derive} is like the square of a probability
amplitude (the latter being the expression inside the $\infty$-norm), and it
involves operators for which the sum of their squares is equal to the identity
instead of their sums being equal to the identity.

\subsection{Bounds on success probability}

\begin{proposition}
\label{prop:bounds-suc-prob-1/2-1}The following bounds hold for the success
probability in \eqref{eq:succ-prob-coh}:%
\begin{equation}
1/2\leq p_{s}^{\text{coh}}(\mathcal{N}^{0},\mathcal{N}^{1})\leq1.
\label{eq:prob-suc-coh-disc-bnds}%
\end{equation}
The upper bound is saturated if and only if the channels are orthogonal (i.e.,
there exists a pure state $\psi_{RA}$ such that $\mathcal{N}_{A\rightarrow
B}^{0}(\psi_{RA})\mathcal{N}_{A\rightarrow B}^{1}(\psi_{RA})=0$). The lower
bound is saturated if the channels are identical (i.e., indistinguishable).
\end{proposition}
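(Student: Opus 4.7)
The plan is to handle the four claims separately using \eqref{eq:unopt-succ-prob}. The upper bound $p_s^{\text{coh}} \leq 1$ is immediate since that expression is a probability. For the lower bound $p_s^{\text{coh}} \geq 1/2$, I would exhibit a universal strategy: pick any $|\psi\rangle_{RA}$ and let $V$ act as a Hadamard on $F$ and trivially on $RBE'$. Since $V$ then commutes with every $W^i_{AG\to BE}$, each term $W^{i\dag}VW^i|\psi\rangle|000\rangle_{GE'F}$ reduces to $|\psi\rangle|0\rangle_G|0\rangle_{E'}|+\rangle_F$ independently of $i$, and a direct computation from \eqref{eq:unopt-succ-prob} shows this strategy achieves exactly $1/2$, so $p_s^{\text{coh}} \geq 1/2$.

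For the saturation $p_s^{\text{coh}} = 1/2$ when $\mathcal{N}^0 = \mathcal{N}^1$, take a common isometric extension $U := U^0 = U^1$. Then $U^{i\dag}VU^i$ in \eqref{eq:unopt-succ-prob} is independent of $i$, the sum factorizes as $(|0\rangle + |1\rangle)_{R_1}|\chi\rangle$ with $|\chi\rangle := U^\dag VU|\psi\rangle|00\rangle_{E'F}$ of $2$-norm at most $1$, and a brief calculation collapses $\langle\Phi|_{R_1F}(|0\rangle + |1\rangle)_{R_1}|\chi\rangle$ to $\langle+|_F|\chi\rangle$. Hence $p_s^{\text{coh}} \leq \frac{1}{2}\|\langle+|_F|\chi\rangle\|_2^2 \leq 1/2$; combined with the lower bound, equality follows.

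For the ``if'' direction of the upper saturation, suppose $\mathcal{N}^0(\psi_{RA})\mathcal{N}^1(\psi_{RA}) = 0$. Since the channel outputs have orthogonal supports in $RB$, any purifications $U^i|\psi\rangle$ lie in orthogonal subspaces of $RBE$, so $\langle U^0\psi|U^1\psi\rangle = 0$ for every pair of isometric extensions. I then define $V$ as the unitary extension on $RBE'F$ of the partial isometry sending $U^i|\psi\rangle|00\rangle_{E'F}$ to $U^i|\psi\rangle|0\rangle_{E'}|i\rangle_F$; this is well defined because the inner products between the $i=0$ and $i=1$ vectors vanish on both the source and target sides. Substituting this $(\psi, V)$ into \eqref{eq:unopt-succ-prob} and carrying out the uncomputing step shows that the pre-measurement state is exactly $|\Phi\rangle_{R_1F}|\psi\rangle_{RA}|0\rangle_G|0\rangle_{E'}$, so the projection succeeds with probability one.

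The hard direction is the converse. If $p_s^{\text{coh}} = 1$, norm preservation forces the pre-measurement state to take the form $|\Phi\rangle_{R_1F}|0\rangle_G|\zeta\rangle_{RAE'}$ for some normalized $|\zeta\rangle$; matching coefficients in each $|i\rangle_{R_1}$ branch and reapplying $W^i$ yields the branchwise identity $V(U^i|\psi\rangle|00\rangle_{E'F}) = (U^i|\zeta\rangle)|i\rangle_F$. I would then expand via Kraus decompositions $U^i|\psi\rangle = \sum_j (I_R \otimes N^{ij})|\psi\rangle \otimes |j\rangle_E$ (and similarly for $|\zeta\rangle$), and take the inner product of the $(i,j) = (0,j)$ and $(i,k) = (1,k)$ source vectors on both sides: the target side vanishes from $\langle 0|1\rangle_F = 0$, while the source side reduces to $\langle\psi|(I_R \otimes N^{0j\dag}N^{1k})|\psi\rangle$. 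Hence this matrix element vanishes for every $j,k$, and expanding $\mathcal{N}^0(\psi_{RA})\mathcal{N}^1(\psi_{RA})$ in terms of these same quantities forces it to zero. The main obstacle here is the careful extraction of the branchwise structure from the probability-one condition and identifying the Kraus-indexed pairing whose inner product recovers $\mathcal{N}^0(\psi_{RA})\mathcal{N}^1(\psi_{RA}) = 0$.
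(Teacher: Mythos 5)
Your proof is correct in outline but takes a genuinely different route from the paper's. The paper proves everything through the algebraic reformulation in Proposition~\ref{prop:coh-ch-disc-suc-prob-derive}: the lower bound by substituting $P^{i}=\sqrt{1/2}\,I_{RB}\otimes|0\rangle_{E^{\prime}}$ (which is exactly your Hadamard-on-$F$ strategy in disguise, since that $V$ has $Q^{i,0}=\tfrac{1}{\sqrt{2}}I$), the saturation at $1/2$ for identical channels via operator-norm inequalities ($\Vert X\Vert_{\infty}^{2}=\Vert X^{\dag}X\Vert_{\infty}$, the bound $P^{1\dag}P^{0}+P^{0\dag}P^{1}\leq P^{0\dag}P^{0}+P^{1\dag}P^{1}$, and unitality of the channel adjoint), and the equivalence $p_{s}^{\text{coh}}=1\Leftrightarrow$ orthogonality by sandwiching $p_{s}^{\text{coh}}\leq p_{s}^{\text{inc}}\leq\sqrt{p_{s}^{\text{coh}}}$ and invoking the known characterization of $p_{s}^{\text{inc}}=1$. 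You instead work directly with the operational expression \eqref{eq:unopt-succ-prob}, construct explicit strategies, and extract orthogonality in the converse by a branchwise analysis. Your route is self-contained (it needs neither \eqref{eq:coh-ch-disc-suc-prob-derive} nor \eqref{eq:coh-inc-bounds}), and your identical-channel argument via a common isometric extension is arguably cleaner than the paper's; the price is heavier bookkeeping in the converse.

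Two steps need tightening. First, in the ``if'' direction of the upper saturation, the justification you give for the existence of $V$ (orthogonality of the global vectors $U^{i}|\psi\rangle$ in $RBE$) only produces a partial isometry on all of $RBEE^{\prime}F$, whereas $V$ must act on $RBE^{\prime}F$ alone and not touch $E$. What you actually need---and what the hypothesis supplies---is orthogonality of the \emph{supports on} $RB$: take $\Pi_{RB}^{1}$ the projection onto $\mathrm{supp}(\mathcal{N}^{1}(\psi_{RA}))$, set $\Pi_{RB}^{0}=I_{RB}-\Pi_{RB}^{1}$, and let $V=\sum_{i}\Pi_{RB}^{i}\otimes I_{E^{\prime}}\otimes X_{F}^{i}$; since $(\Pi_{RB}^{i}\otimes I_{E})U^{i}|\psi\rangle=U^{i}|\psi\rangle$, this realizes your desired branchwise map. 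Second, your converse presumes an optimal strategy attaining $p_{s}^{\text{coh}}=1$; this holds because the dimension bound on $E^{\prime}$ from Proposition~\ref{prop:coh-ch-disc-suc-prob-derive} makes the strategy set compact, but it should be stated. Granting these, the rest of your converse is sound: because $V$ acts trivially on $E$, matching $|j\rangle_{E}$ coefficients in the branchwise identity is legitimate, the images of the $(0,j)$ and $(1,k)$ source vectors under $V$ carry orthogonal $F$-flags, unitarity then gives $\langle\psi|(I_{R}\otimes N_{j}^{0\dag}N_{k}^{1})|\psi\rangle=0$ for all $j,k$, and $\operatorname{Tr}[\mathcal{N}^{0}(\psi_{RA})\mathcal{N}^{1}(\psi_{RA})]=\sum_{j,k}|\langle\psi|(I_{R}\otimes N_{j}^{0\dag}N_{k}^{1})|\psi\rangle|^{2}=0$ finishes the argument.
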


The upper bound is obvious since $p_{s}^{\text{coh}}$ is a probability, and
the necessary and sufficient condition for saturation follows by employing the
bounds%
\begin{equation}
p_{s}^{\text{coh}}(\mathcal{N}^{0},\mathcal{N}^{1})\leq p_{s}^{\text{inc}%
}(\mathcal{N}^{0},\mathcal{N}^{1})\leq\sqrt{p_{s}^{\text{coh}}(\mathcal{N}%
^{0},\mathcal{N}^{1})}, \label{eq:coh-inc-bounds}%
\end{equation}
discussed later. The lower bound follows by setting $P^{i}=\sqrt{1/2}%
I_{RB}\otimes|0\rangle_{E^{\prime}}$ for $i\in\left\{  0,1\right\}  $ in
\eqref{eq:coh-ch-disc-suc-prob-derive}, which corresponds to \textquotedblleft
not even trying to distinguish,\textquotedblright\ and the sufficient
saturation condition follows by direct evaluation.

\subsection{Non-increase under a superchannel}

\label{sec:nonincrease-superch}A key property of the success probability
$p_{s}^{\text{coh}}(\mathcal{N}^{0},\mathcal{N}^{1})$ in
\eqref{eq:succ-prob-coh} is that it does not increase under the action of a
quantum superchannel. This is a basic property expected of any channel
distinguishability measure, and it was recently shown that the diamond
distance (and thus the success probability in
\eqref{eq:incoh-succ-prob-ch-disc}) satisfies this property \cite{G18}.

To expand upon this statement, recall from \cite{CDP08} that a quantum
superchannel is a physical mapping of a quantum channel to a quantum channel,
and it should be this way even when acting on one share of an arbitrary
bipartite channel. In more detail, a superchannel $\Theta_{\left(
A\rightarrow B\right)  \rightarrow(C\rightarrow D)}$ is a linear map that
completely preserves the properties of complete positivity and trace
preservation. Then for an arbitrary input bipartite channel $\mathcal{M}%
_{RA\rightarrow RB}$, the output $\Theta_{\left(  A\rightarrow B\right)
\rightarrow(C\rightarrow D)}(\mathcal{M}_{RA\rightarrow RB})$ is a bipartite
channel from systems $RC$ to systems $RD$. The fundamental theorem of
superchannels is that any superchannel has a physical realization in terms of
a pre-processing channel $\mathcal{E}_{C\rightarrow AM}$ and a post-processing
channel $\mathcal{D}_{BM\rightarrow D}$ \cite{CDP08}:
\begin{multline}
\Theta_{\left(  A\rightarrow B\right)  \rightarrow(C\rightarrow D)}%
(\mathcal{M}_{RA\rightarrow RB})=\label{eq:FTOSCs}\\
\mathcal{D}_{BM\rightarrow D}\circ\mathcal{M}_{RA\rightarrow RB}%
\circ\mathcal{E}_{C\rightarrow AM}.
\end{multline}

With the fundamental theorem of superchannels in hand, we can arrive at an
operational proof that the success probability in
\eqref{eq:succ-prob-coh} does not increase under the action of a
superchannel. To see this, consider that a particular strategy of the prover
for coherently distinguishing the channels $\mathcal{N}^{0}$ and
$\mathcal{N}^{1}$ is to prepare a state $|\psi\rangle_{RC}$ and act with an
isometric extension $U_{C\rightarrow AMM_{1}}^{\mathcal{E}}$ of the
pre-processing $\mathcal{E}_{C\rightarrow AM}$ on system $C$. Then the
verifier performs the controlled unitary in \eqref{eq:controlled-unitary}, and
the prover performs an isometric extension $U_{BM\rightarrow DM_{2}%
}^{\mathcal{D}}$ of the post-processing $\mathcal{D}_{BM\rightarrow D}$, the
unitary $V_{RBE^{\prime}F}$, and the adjoint of $U_{BM\rightarrow DM_{2}%
}^{\mathcal{D}}$ (the last being implemented by a unitary and a projection).
The verifier finally performs the inverse of \eqref{eq:controlled-unitary} and
the projective measurement in \eqref{eq:verifier-final-proj-meas}. Since the
success probability does not increase under the action of the adjoint of
$U_{C\rightarrow AMM_{1}}^{\mathcal{E}}$ and since this is a particular
strategy for coherent discrimination of $\mathcal{N}^{0}$ and $\mathcal{N}%
^{1}$, while being a general strategy for coherent discrimination of
$\Theta(\mathcal{N}^{0})$ and $\Theta(\mathcal{N}^{1})$, we conclude that the
success probability does not increase under the action of a superchannel:

\begin{theorem}
\label{thm:monotone-super-ch}Let $\mathcal{N}_{A\rightarrow B}^{0}$ and
$\mathcal{N}_{A\rightarrow B}^{1}$ be quantum channels, and let $\Theta
_{\left(  A\rightarrow B\right)  \rightarrow(C\rightarrow D)}$ be a quantum
superchannel. Then the success probability of coherent channel discrimination
in \eqref{eq:succ-prob-coh}\ does not increase under the action of
$\Theta_{\left(  A\rightarrow B\right)  \rightarrow(C\rightarrow D)}$:%
\begin{equation}
p_{s}^{\text{coh}}(\mathcal{N}^{0},\mathcal{N}^{1})\geq p_{s}^{\text{coh}%
}(\Theta(\mathcal{N}^{0}),\Theta(\mathcal{N}^{1})).
\label{eq:non-increase-superch}%
\end{equation}

\end{theorem}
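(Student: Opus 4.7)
The plan is to formalize the operational argument sketched immediately before the theorem statement. Start with an arbitrary prover strategy $(|\psi\rangle_{RC}, V_{RDE'F})$ for coherent discrimination of $\Theta(\mathcal{N}^{0})$ and $\Theta(\mathcal{N}^{1})$ achieving some success probability $p_{\Theta}$. The aim is to construct a prover strategy for coherent discrimination of $\mathcal{N}^{0}$ and $\mathcal{N}^{1}$ whose success probability is at least $p_{\Theta}$; taking the supremum over $\Theta$-strategies then yields \eqref{eq:non-increase-superch}.

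The first step is to apply the fundamental theorem of superchannels \eqref{eq:FTOSCs}, writing $\Theta(\mathcal{N}^{i}) = \mathcal{D}_{BM\to D} \circ \mathcal{N}^{i}_{A\to B} \circ \mathcal{E}_{C\to AM}$, and then to fix isometric extensions $U^{\mathcal{E}}_{C\to AMM_{1}}$, $U^{i}_{A\to BE}$, and $U^{\mathcal{D}}_{BM\to DM_{2}}$ of $\mathcal{E}$, $\mathcal{N}^{i}$, and $\mathcal{D}$. Then $U^{\Theta,i} := U^{\mathcal{D}} U^{i} U^{\mathcal{E}}$ is an isometric extension of $\Theta(\mathcal{N}^{i})$, and by the invariance of $p_{s}^{\text{coh}}$ under the choice of isometric extension (observed after \eqref{eq:unopt-succ-prob}), $p_{\Theta}$ may be computed using this particular extension. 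Because $U^{\mathcal{E}}$ and $U^{\mathcal{D}}$ do not depend on $i$, one obtains the factorization
\begin{equation*}
\sum_{i} |i\rangle_{R_{1}} U^{\Theta,i\dagger} V U^{\Theta,i} = U^{\mathcal{E}\dagger} \Bigl[\sum_{i} |i\rangle_{R_{1}} U^{i\dagger} \bigl(U^{\mathcal{D}\dagger} V U^{\mathcal{D}}\bigr) U^{i}\Bigr] U^{\mathcal{E}}.
\end{equation*}

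Next, build the simulated $\mathcal{N}$-strategy. Define $|\psi'\rangle_{R'A} := U^{\mathcal{E}}|\psi\rangle_{RC}$ with extended reference $R' := RMM_{1}$, and dilate the contraction $U^{\mathcal{D}\dagger} V U^{\mathcal{D}}$ (which has operator norm at most $1$) to a unitary $\tilde{V}$ acting on $RBME'F$ together with a fresh $|0\rangle_{K}$ ancilla, so that $\langle 0|_{K}\tilde{V}|0\rangle_{K} = U^{\mathcal{D}\dagger} V U^{\mathcal{D}}$. The pair $(|\psi'\rangle_{R'A}, \tilde{V})$ is then a valid $\mathcal{N}$-strategy, with $M, E', K$ absorbed into the prover's working space. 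Substituting the factorization into the amplitude defining $p_{\Theta}$ and commuting the prover-side partial isometry $U^{\mathcal{E}\dagger}$ past $\langle\Phi|_{R_{1}F}$ (disjoint systems) expresses $\sqrt{2\,p_{\Theta}}$ as $\|U^{\mathcal{E}\dagger}|\chi_{0}\rangle\|_{2}$, where $|\chi_{0}\rangle$ is the $|0\rangle_{K}$-component of the $\mathcal{N}$-game amplitude vector for $(|\psi'\rangle,\tilde{V})$. Since $U^{\mathcal{E}\dagger}$ is a partial isometry and the $|k\rangle_{K}$ components with $k\neq 0$ contribute non-negatively to the full $\mathcal{N}$-game success probability, one obtains $p_{\Theta} \leq p_{s}^{\text{coh}}(\mathcal{N}^{0},\mathcal{N}^{1},|\psi'\rangle,\tilde{V}) \leq p_{s}^{\text{coh}}(\mathcal{N}^{0},\mathcal{N}^{1})$.

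The main obstacle I anticipate is the non-unitarity of $U^{\mathcal{D}\dagger}$: the Stinespring dilation must be set up so that the new ancilla $K$ sits on the prover's side, with it being left unmeasured yielding the inequality rather than equality. A secondary subtlety is confirming that the verifier's final projection onto $|0\rangle_{G}$ is compatible with the simulated protocol; because the prover never touches the verifier-held environments $G$ or $E$ in either game, the inverse controlled unitary \eqref{eq:inverse-unitary} resets $G$ to $|0\rangle$ identically in both.
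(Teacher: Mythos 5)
Your proposal is correct and follows essentially the same route as the paper's operational proof of this theorem: absorb the pre-processing isometry into the prover's input state, absorb the post-processing into the prover's unitary, and bound the resulting amplitude by dropping the projections onto the extra environment systems. The only cosmetic difference is that the paper works with unitary extensions $W^{\mathcal{D}}_{BMG'\rightarrow DM_{2}}$ with an explicit $|0\rangle_{G'}$ input (so that $W^{\mathcal{D}\dag}VW^{\mathcal{D}}$ is automatically unitary and the dropped projection is the one onto $|0\rangle_{G'}$), whereas you dilate the contraction $U^{\mathcal{D}\dag}VU^{\mathcal{D}}$ by hand with a fresh ancilla $K$ --- the two devices are interchangeable.
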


A strictly mathematical proof of \eqref{eq:non-increase-superch} is to employ
\eqref{eq:coh-ch-disc-suc-prob-derive}, the fundamental theorem of
superchannels in \eqref{eq:FTOSCs}, and the fact that the $\infty$-norm does
not increase under the action of a completely positive unital map or a projection.

\subsection{Computable by semi-definite programming}

The success probability in \eqref{eq:succ-prob-coh} can be computed by means
of the following semi-definite program:%
\begin{equation}
\sup_{\substack{\sigma_{R_{1}FBE},\\\rho_{A}}}\left\{  \operatorname{Tr}%
[Y_{R_{1}FBE}\sigma_{R_{1}FBE}]:\operatorname{Tr}_{BF}[\sigma_{R_{1}%
FBE}]=Z_{R_{1}E}^{\rho}\right\}  , \label{eq:SDP-succ-prob-coh-disc}%
\end{equation}
where $\sigma_{R_{1}FBE}$ and $\rho_{A}$ are density operators and%
\begin{align*}
Y_{R_{1}FBE}  &  :=\frac{1}{2}\sum_{i,j\in\left\{  0,1\right\}  ,k,\ell
}|ii\rangle\langle jj|_{R_{1}F}\otimes N_{k}^{i}N_{\ell}^{j\dag}%
\otimes|k\rangle\langle\ell|_{E},\\
Z_{R_{1}E}^{\rho}  &  :=\frac{1}{2}\sum_{i,j\in\left\{  0,1\right\}  ,k,\ell
}\operatorname{Tr}[N_{\ell}^{j\dag}N_{k}^{i}\rho_{A}]|i\rangle\langle
j|_{R_{1}}\otimes|k\rangle\langle\ell|_{E},
\end{align*}
with $\{N_{k}^{i}\}_{k}$ a set of Kraus operators for the channel
$\mathcal{N}_{A\rightarrow B}^{i}$ for $i\in\left\{  0,1\right\}  $. This
follows from the observation that coherent channel discrimination is a quantum
interactive proof, and the acceptance probability of any quantum interactive
proof can be calculated by means of a semi-definite program \cite{KW00,VW16}.
In the above semi-definite program, the density operator $\rho_{A}$ can be
understood as the reduction of the initial state of the prover on system $A$,
and the density operator $Z_{R_{1}E}^{\rho}$ is the reduced state from
\eqref{eq:state-verifier-after-1st-unitary}\ on systems $R_{1}E$. The
projection $Y_{R_{1}FBE}$ corresponds to the concatenation of the inverse
unitary in \eqref{eq:inverse-unitary} followed by the projection in
\eqref{eq:verifier-final-proj-meas}\ onto the accepting subspace. The equality
constraint in \eqref{eq:SDP-succ-prob-coh-disc} corresponds to the fact that
the state of the verifier on systems $R_{1}$ and $E$ should be the same before
and after the prover acts with the unitary $V_{RBE^{\prime}F}$.

The dual semi-definite program is given by%
\begin{align}
\inf_{\lambda,W_{R_{1}E}}\lambda &  \qquad\text{subject to}%
\label{eq:dual-SDP-succ-prob-coh}\\
W_{R_{1}E}\otimes I_{BF}  &  \geq Y_{R_{1}EBF},\\
\lambda I_{A}  &  \geq\frac{1}{2}\sum_{i,j\in\left\{  0,1\right\}  ,k,\ell
}w_{R_{1}E}^{ikj\ell}N_{k}^{i\dag}N_{\ell}^{j},
\end{align}
where $\lambda\in\mathbb{R}$, the operator $W_{R_{1}E}$ is Hermitian, and
$w_{R_{1}E}^{ikj\ell}:=\langle i|_{R_{1}}\langle k|_{E}W_{R_{1}E}%
|j\rangle_{R_{1}}|\ell\rangle_{E}$. This follows by the standard Lagrange
multiplier method.

\section{Incoherent Channel Discrimination with Uncomputing}

Another variation of channel discrimination is to follow the same protocol for
coherent channel discrimination but have the initial state be the maximally
mixed state and the final measurement be as in
\eqref{eq:final-meas-incoh-ch-disc}, with the first outcome indicating
success. So this is the main difference with coherent channel discrimination,
and the main difference with incoherent channel discrimination is that we
include a step for uncomputing. Let $p_{s}^{\text{inc,unc}}(\mathcal{N}%
^{0},\mathcal{N}^{1})$ denote the success probability for this case. We then
have the following bounds, implying \eqref{eq:coh-inc-bounds}:%
\begin{equation}
p_{s}^{\text{coh}}\leq p_{s}^{\text{inc,unc}}\leq p_{s}^{\text{inc}}\leq
\sqrt{p_{s}^{\text{coh}}}, \label{eq:bounds-coh-inc-incunc}%
\end{equation}
where the channel arguments are left implicit for brevity.

\section{Conclusion}

This paper has introduced a coherent version of quantum channel discrimination
and investigated various aspects of the success probability. I have proven an
alternate expression for it in
Proposition~\ref{prop:coh-ch-disc-suc-prob-derive}, some bounds in
Proposition~\ref{prop:bounds-suc-prob-1/2-1}\ and
Eq.~\eqref{eq:bounds-coh-inc-incunc}, that it does not increase under the
action of a quantum superchannel, and that it can be calculated by means of a
semi-definite program. An intriguing open question is to determine if
$p_{s}^{\text{coh}}-1/2$ is a metric on quantum channels. Consider that
$p_{s}^{\text{inc}}-1/2$ is, as is clear from~\eqref{eq:incoh-succ-prob-ch-disc}.

\section*{Acknowledgment}

I thank Stefan B\"{a}uml, Siddhartha Das, Felix Leditzky, and Xin Wang for
discussions related to the topic of this paper. I also acknowledge support
from the National Science Foundation under grant no.~1907615.

\newpage
\bibliographystyle{IEEEtran}
\bibliography{Ref}

\newpage
\appendices

\section{Proof of Proposition~\ref{prop:coh-ch-disc-suc-prob-derive}}

\label{app:proof-of-prop-succ-prob-coh}Let us begin with the expression in
\eqref{eq:unopt-succ-prob} for the unoptimized success probability:%
\begin{align}
&  \frac{1}{2}\left\Vert \langle\Phi|_{R_{1}F}\sum_{i\in\left\{  0,1\right\}
}|i\rangle_{R_{1}}U^{i\dag}VU^{i}|\psi\rangle_{RA}|00\rangle_{E^{\prime}%
F}\right\Vert _{2}^{2}\nonumber\\
&  =\frac{1}{4}\left\Vert \sum_{i,j\in\left\{  0,1\right\}  }\langle
jj|_{R_{1}F}|i\rangle_{R_{1}}U^{i\dag}VU^{i}|\psi\rangle_{RA}|00\rangle
_{E^{\prime}F}\right\Vert _{2}^{2}\\
&  =\frac{1}{4}\left\Vert \sum_{i\in\left\{  0,1\right\}  }\langle
i|_{F}U^{i\dag}VU^{i}|\psi\rangle_{RA}|00\rangle_{E^{\prime}F}\right\Vert
_{2}^{2}\\
&  =\frac{1}{4}\left\Vert \sum_{i\in\left\{  0,1\right\}  }\langle
i|_{F}\mathcal{N}^{i\dag}(V)|\psi\rangle_{RA}|00\rangle_{E^{\prime}%
F}\right\Vert _{2}^{2}, \label{eq:pf-simpler-exp-1}%
\end{align}
where we used the fact that $U^{i\dag}VU^{i}$ can be expressed in terms of the
channel adjoint as $\mathcal{N}^{i\dag}(V)$ \cite{W17book}. Let us write%
\begin{equation}
V_{RBE^{\prime}F}=\sum_{j,k\in\left\{  0,1\right\}  }Q_{RBE^{\prime}}%
^{j,k}\otimes|j\rangle\langle k|_{F}, \label{eq:V-decomp}%
\end{equation}
where the operators $Q_{RBE^{\prime}}^{j,k}$ satisfy%
\begin{align}
Q_{RBE^{\prime}}^{0,0\dag}Q_{RBE^{\prime}}^{0,0}+Q_{RBE^{\prime}}^{1,0\dag
}Q_{RBE^{\prime}}^{1,0}  &  =I_{RBE^{\prime}},\\
Q_{RBE^{\prime}}^{0,1\dag}Q_{RBE^{\prime}}^{0,0}+Q_{RBE^{\prime}}^{1,1\dag
}Q_{RBE^{\prime}}^{1,0}  &  =0,\\
Q_{RBE^{\prime}}^{0,0\dag}Q_{RBE^{\prime}}^{0,1}+Q_{RBE^{\prime}}^{1,0\dag
}Q_{RBE^{\prime}}^{1,1}  &  =0,\\
Q_{RBE^{\prime}}^{0,1\dag}Q_{RBE^{\prime}}^{0,1}+Q_{RBE^{\prime}}^{1,1\dag
}Q_{RBE^{\prime}}^{1,1}  &  =I_{RBE^{\prime}},
\end{align}
in order for $V_{RBE^{\prime}F}$ to be unitary. Then we find that%
\begin{align}
&  \langle i|_{F}\mathcal{N}^{i\dag}(V)|\psi\rangle_{RA}|00\rangle_{E^{\prime
}F}\nonumber\\
&  =\sum_{j,k\in\left\{  0,1\right\}  }\langle i|_{F}\left[  \mathcal{N}%
^{i\dag}(Q_{RBE^{\prime}}^{j,k})\otimes|j\rangle\langle k|_{F}\right]
|\psi\rangle_{RA}|00\rangle_{E^{\prime}F}\nonumber\\
&  =\mathcal{N}^{i\dag}(Q_{RBE^{\prime}}^{i,0})|\psi\rangle_{RA}%
|0\rangle_{E^{\prime}}\nonumber\\
&  =\mathcal{N}^{i\dag}(Q_{RBE^{\prime}}^{i,0}|0\rangle_{E^{\prime}}%
)|\psi\rangle_{RA} \label{eq:helper-calc-overlap-Q}%
\end{align}
which leads to%
\begin{equation}
\text{Eq.}~\eqref{eq:pf-simpler-exp-1}=\frac{1}{4}\left\Vert \sum
_{i\in\left\{  0,1\right\}  }\mathcal{N}^{i\dag}(Q_{RBE^{\prime}}%
^{i,0}|0\rangle_{E^{\prime}})|\psi\rangle_{RA}\right\Vert _{2}^{2}.
\end{equation}
Now optimizing over all input states $|\psi\rangle_{RA}$ and unitaries
$V_{RBE^{\prime}F}$, while setting%
\begin{equation}
P_{RB\rightarrow RBE^{\prime}}^{i}:=Q_{RBE^{\prime}}^{i,0}|0\rangle
_{E^{\prime}}, \label{eq:Pi-def}%
\end{equation}
we find that%
\begin{multline}
p_{s}^{\text{coh}}(\mathcal{N}^{0},\mathcal{N}^{1})=\\
\sup_{_{\substack{\left\{  P^{i}\right\}  _{i\in\left\{  0,1\right\}  }%
:\\\sum_{i}P^{i\dag}P^{i}=I_{RB}}}}\frac{1}{4}\left\Vert \sum_{i\in\left\{
0,1\right\}  }\mathcal{N}^{i\dag}(P_{RB\rightarrow RBE^{\prime}}%
^{i})\right\Vert _{\infty}^{2}.
\end{multline}
Also, note that to any set $\left\{  P^{i}\right\}  _{i\in\left\{
0,1\right\}  }$ satisfying $\sum_{i}P^{i\dag}P^{i}=I_{RB}$, we can complete it
to a unitary $V_{RBE^{\prime}F}$.

Since the unitary $V_{RBE^{\prime}F}$ implements a quantum channel from
systems $RB$ to $BF$, and since the dimension of the environment of any
quantum channel need not be larger than the product of the input and output
dimensions, it suffices to take $\left\vert E^{\prime}\right\vert =\left\vert
R\right\vert \left\vert B\right\vert ^{2}\left\vert F\right\vert $. Since
$\left\vert R\right\vert =\left\vert A\right\vert $ and $\left\vert
F\right\vert =2$, it suffices to take $\left\vert E^{\prime}\right\vert
=2\left\vert A\right\vert \left\vert B\right\vert ^{2}$ as claimed.

\section{Proof of Proposition~\ref{prop:bounds-suc-prob-1/2-1}}

The upper bound in \eqref{eq:prob-suc-coh-disc-bnds}\ trivially follows
because $p_{s}^{\text{coh}}(\mathcal{N}^{0},\mathcal{N}^{1})$ is a
probability. The lower bound in \eqref{eq:prob-suc-coh-disc-bnds}\ follows by
picking $P_{RB\rightarrow RBE^{\prime}}^{i}=\sqrt{1/2}I_{RB}\otimes
|0\rangle_{E^{\prime}}$ for $i\in\left\{  0,1\right\}  $ and evaluating
\eqref{eq:coh-ch-disc-suc-prob-derive}. Consider that%
\begin{align}
&  \left[  \frac{1}{2}\sup_{\substack{\left\{  P^{i}\right\}  _{i\in\left\{
0,1\right\}  }:\\\sum_{i}P^{i\dag}P^{i}=I_{RB}}}\left\Vert \sum_{i\in\left\{
0,1\right\}  }(\mathcal{N}_{A\rightarrow B}^{i})^{\dag}(P_{RB\rightarrow
RBE^{\prime}}^{i})\right\Vert _{\infty}\right]  ^{2}\nonumber\\
&  \geq\left[  \frac{1}{2}\left\Vert \sqrt{\frac{1}{2}}\sum_{i\in\left\{
0,1\right\}  }(\mathcal{N}_{A\rightarrow B}^{i})^{\dag}(I_{RB}\otimes
|0\rangle_{E^{\prime}})\right\Vert _{\infty}\right]  ^{2}\\
&  =\left[  \frac{1}{2\sqrt{2}}\left\Vert 2(I_{RB}\otimes|0\rangle_{E^{\prime
}})\right\Vert _{\infty}\right]  ^{2}\\
&  =\left[  \frac{1}{\sqrt{2}}\right]  ^{2}=\frac{1}{2}.
\end{align}
The first inequality follows by picking $P_{RB\rightarrow RBE^{\prime}}^{i}$
as indicated. The first equality follows because $(\mathcal{N}_{A\rightarrow
B}^{i})^{\dag}$ is a unital map.

If the channels $\mathcal{N}_{A\rightarrow B}^{0}$ and $\mathcal{N}%
_{A\rightarrow B}^{1}$ are the same (so that $\mathcal{N}_{A\rightarrow B}%
^{0}=\mathcal{N}_{A\rightarrow B}^{1}= \mathcal{N}_{A\rightarrow B}$), then
consider for $\left\{  P^{i}\right\}  _{i\in\left\{  0,1\right\}  }$
satisfying $\sum_{i\in\left\{  0,1\right\}  }P^{i\dag}P^{i}=I_{RB}$ that%
\begin{align}
&  \left[  \frac{1}{2}\left\Vert \sum_{i\in\left\{  0,1\right\}  }%
(\mathcal{N}_{A\rightarrow B}^{i})^{\dag}(P_{RB\rightarrow RBE^{\prime}}%
^{i})\right\Vert _{\infty}\right]  ^{2}\nonumber\\
&  =\left[  \frac{1}{2}\left\Vert (\mathcal{N}_{A\rightarrow B})^{\dag}\left(
\sum_{i\in\left\{  0,1\right\}  }P_{RB\rightarrow RBE^{\prime}}^{i}\right)
\right\Vert _{\infty}\right]  ^{2}\\
&  \leq\left[  \frac{1}{2}\left\Vert \sum_{i\in\left\{  0,1\right\}
}P_{RB\rightarrow RBE^{\prime}}^{i}\right\Vert _{\infty}\right]  ^{2}\\
&  =\frac{1}{4}\left\Vert \sum_{i\in\left\{  0,1\right\}  }P_{RB\rightarrow
RBE^{\prime}}^{i}\right\Vert _{\infty}^{2}\\
&  =\frac{1}{4}\left\Vert \left(  \sum_{j\in\left\{  0,1\right\}
}P_{RB\rightarrow RBE^{\prime}}^{j\dag}\right)  \left(  \sum_{i\in\left\{
0,1\right\}  }P_{RB\rightarrow RBE^{\prime}}^{i}\right)  \right\Vert _{\infty
}\\
&  =\frac{1}{4}\left\Vert P^{0\dag}P^{0}+P^{1\dag}P^{1}+P^{1\dag}%
P^{0}+P^{0\dag}P^{1}\right\Vert _{\infty}\\
&  \leq\frac{1}{4}\left\Vert 2\left(  P^{0\dag}P^{0}+P^{1\dag}P^{1}\right)
\right\Vert _{\infty}\\
&  =1/2.
\end{align}
The first equality follows from the assumption that the channels are the same.
The first inequality follows because the operator norm is non-increasing under
the action of a completely positive unital map \cite{P03}. The third equality
follows because $\left\Vert A\right\Vert _{\infty}^{2}=\left\Vert A^{\dag
}A\right\Vert _{\infty}$. The second inequality follows because%
\begin{equation}
P^{1\dag}P^{0}+P^{0\dag}P^{1}\leq P^{0\dag}P^{0}+P^{1\dag}P^{1},
\label{eq:proj-ineq-basic}%
\end{equation}
which is equivalent to%
\begin{equation}
\left(  P^{0\dag}-P^{1\dag}\right)  \left(  P^{0}-P^{1}\right)  \geq0.
\end{equation}
The final equality follows because $\sum_{i\in\left\{  0,1\right\}  }P^{i\dag
}P^{i}=I_{RB}$ and $\left\Vert I\right\Vert _{\infty}=1$. Since the lower
bound in \eqref{eq:prob-suc-coh-disc-bnds} always holds, we conclude that
$p_{s}^{\text{coh}}(\mathcal{N}^{0},\mathcal{N}^{1})=1/2$ if $\mathcal{N}%
^{0}=\mathcal{N}^{1}$.

If the channels $\mathcal{N}^{0}$ and $\mathcal{N}^{1}$ are perfectly
distinguishable, then this means that $p_{s}^{\text{inc}}(\mathcal{N}%
^{0},\mathcal{N}^{1})=1$. Applying the upper bound in
\eqref{eq:coh-inc-bounds} implies that $p_{s}^{\text{coh}}(\mathcal{N}%
^{0},\mathcal{N}^{1})=1$. If instead $p_{s}^{\text{coh}}(\mathcal{N}%
^{0},\mathcal{N}^{1})=1$, then the lower bound in \eqref{eq:coh-inc-bounds}
implies that $p_{s}^{\text{inc}}(\mathcal{N}^{0},\mathcal{N}^{1})=1$. Then if
$p_{s}^{\text{inc}}(\mathcal{N}^{0},\mathcal{N}^{1})=1$, it is known that
$\mathcal{N}^{0}$ and $\mathcal{N}^{1}$ are perfectly distinguishable. The
bounds in \eqref{eq:coh-inc-bounds} are proved in
Appendix~\ref{app:bnds-proofs-coh-inc-incunc}.

\section{Proof of Theorem~\ref{thm:monotone-super-ch}}

This appendix establishes a proof of Theorem~\ref{thm:monotone-super-ch},
which states that the success probability in \eqref{eq:succ-prob-coh} does not
increase under the action of a quantum superchannel.

First let us consider a more mathematical proof that requires fewer steps. Let
$\Theta_{\left(  A\rightarrow B\right)  \rightarrow\left(  C\rightarrow
D\right)  }$ denote a quantum superchannel. Exploiting the expression in
\eqref{eq:coh-ch-disc-suc-prob-derive}, we can write the success probability
$p_{s}^{\text{coh}}(\Theta(\mathcal{N}^{0}),\Theta(\mathcal{N}^{1}))$ as%
\begin{multline}
p_{s}^{\text{coh}}(\Theta(\mathcal{N}^{0}),\Theta(\mathcal{N}^{1}))=\\
\left[  \frac{1}{2}\sup_{\substack{\left\{  Q^{i}\right\}  _{i}:\\\sum
_{i}Q^{i\dag}Q^{i}=I}}\left\Vert \sum_{i\in\left\{  0,1\right\}  }%
(\Theta(\mathcal{N}_{A\rightarrow B}^{i}))^{\dag}(Q_{RD\rightarrow
RDE^{\prime\prime}}^{i})\right\Vert _{\infty}\right]  ^{2}.
\end{multline}
Let $\left\{  Q^{i}\right\}  _{i\in\left\{  0,1\right\}  }$ be an arbitrary
set of operators satisfying $\sum_{i}Q^{i\dag}Q^{i}=I_{RD}$. Then consider from the fundamental theorem of
superchannels in \eqref{eq:FTOSCs}
that%
\begin{align}
&  \left\Vert \sum_{i\in\left\{  0,1\right\}  }(\Theta(\mathcal{N}%
_{A\rightarrow B}^{i}))^{\dag}(Q_{RD\rightarrow RDE^{\prime\prime}}%
^{i})\right\Vert _{\infty}\nonumber\\
&  =\left\Vert \sum_{i\in\left\{  0,1\right\}  }((\mathcal{E}_{C\rightarrow
AM})^{\dag}\circ(\mathcal{N}_{A\rightarrow B}^{i})^{\dag}\circ(\mathcal{D}%
_{BM\rightarrow D})^{\dag})(Q^{i})\right\Vert _{\infty}\nonumber\\
&  \leq\left\Vert \sum_{i\in\left\{  0,1\right\}  }((\mathcal{N}_{A\rightarrow
B}^{i})^{\dag}\circ(\mathcal{D}_{BM\rightarrow D})^{\dag})(Q_{RD\rightarrow
RDE^{\prime\prime}}^{i})\right\Vert _{\infty},\label{eq:superch-mathy-1}%
\end{align}
where the inequality follows from the fact that a completely positive unital
map does not increase the operator norm \cite{P03}. Let $W_{BMG^{\prime
}\rightarrow DM_{2}}$ be a unitary extension of $\mathcal{D}_{BM\rightarrow
D}$, with input environment $G^{\prime}$ and output environment $M_{2}$, so
that%
\begin{multline}
\mathcal{D}_{BM\rightarrow D}(\cdot)=\\
\operatorname{Tr}_{M_{2}}[W_{BMG^{\prime}\rightarrow DM_{2}}[(\cdot
)\otimes|0\rangle\langle0|_{G^{\prime}}](W_{BMG^{\prime}\rightarrow DM_{2}%
})^{\dag}].
\end{multline}
Then it is well known \cite{W17book} that%
\begin{multline}
(\mathcal{D}_{BM\rightarrow D})^{\dag}(\cdot)=\\
\langle0|_{G^{\prime}}(W_{BMG^{\prime}\rightarrow DM_{2}})^{\dag}%
[(\cdot)\otimes I_{M_{2}}]W_{BMG^{\prime}\rightarrow DM_{2}}|0\rangle
_{G^{\prime}}.
\end{multline}
Substituting above, we find that%
\begin{align}
&  \text{Eq.}~\eqref{eq:superch-mathy-1}\nonumber\\
&  =\left\Vert \sum_{i\in\left\{  0,1\right\}  }(\mathcal{N}_{A\rightarrow
B}^{i})^{\dag}(\langle0|_{G^{\prime}}W^{\dag}(Q^{i}\otimes I_{M^{\prime}%
})W|0\rangle_{G^{\prime}})\right\Vert _{\infty}\nonumber\\
&  \leq\left\Vert \sum_{i\in\left\{  0,1\right\}  }(\mathcal{N}_{A\rightarrow
B}^{i})^{\dag}(W^{\dag}(Q_{RD\rightarrow RDE^{\prime\prime}}^{i}\otimes
I_{M^{\prime}})W|0\rangle_{G^{\prime}})\right\Vert _{\infty}\nonumber\\
&  \leq\sup_{\substack{\left\{  P^{i}\right\}  _{i\in\left\{  0,1\right\}
}:\\\sum_{i}P^{i\dag}P^{i}=I_{RB}}}\left\Vert \sum_{i\in\left\{  0,1\right\}
}(\mathcal{N}_{A\rightarrow B}^{i})^{\dag}(P_{RB\rightarrow RBE^{\prime}}%
^{i})\right\Vert _{\infty}\nonumber\\
&  =2 \sqrt{p_{s}^{\text{coh}}(\mathcal{N}^{0},\mathcal{N}^{1})}.
\end{align}
The first inequality follows because a projection onto $\langle0|_{G^{\prime}%
}$ does not increase the operator norm, and the second inequality follows
because the set $\{W^{\dag}(Q_{RD\rightarrow RDE^{\prime\prime}}^{i}\otimes
I_{M^{\prime}})W|0\rangle_{G^{\prime}}\}_{i}$ is a particular choice of
$\left\{  P^{i}\right\}  _{i\in\left\{  0,1\right\}  }$ satisfying $\sum
_{i}P^{i\dag}P^{i}=I_{RB}$. We then conclude the inequality in \eqref{eq:non-increase-superch} since $\left\{  Q^{i}\right\}  _{i\in\left\{  0,1\right\}  }$ is an arbitrary
set of operators satisfying $\sum_{i}Q^{i\dag}Q^{i}=I_{RD}$.

\begin{figure*}[ptb]
\begin{center}
\includegraphics[
width=6.5518in
]{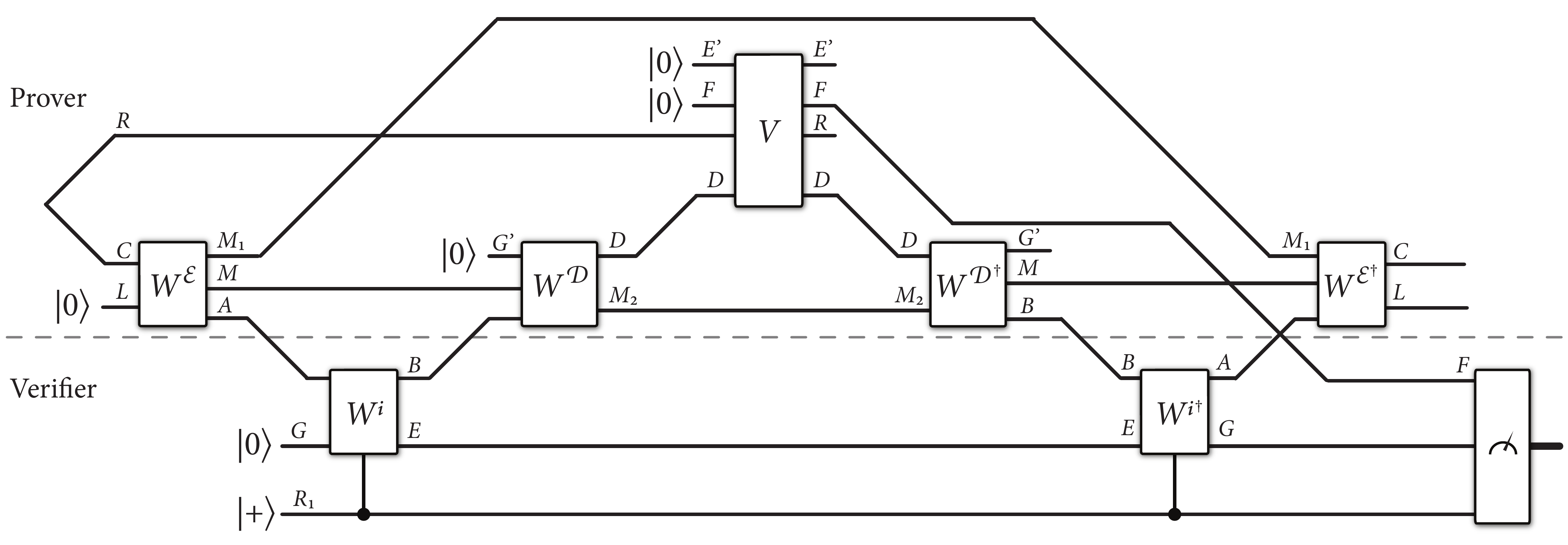}
\end{center}
\caption{Depiction of the operational proof of
Theorem~\ref{thm:monotone-super-ch}. This is a particular strategy for
coherent channel discrimination of the channels $\mathcal{N}_{A\rightarrow
B}^{0}$ and $\mathcal{N}_{A\rightarrow B}^{1}$, but it is a general strategy
for coherent channel discrimination of $\Theta(\mathcal{N}_{A\rightarrow
B}^{0})$ and $\Theta(\mathcal{N}_{A\rightarrow B}^{1})$, where $\Theta
_{\left(  A\rightarrow B\right)  \rightarrow\left(  C\rightarrow D\right)  }$
is a quantum superchannel.}
\label{fig:nonincrease-super-ch}
\end{figure*}

A more operational proof of Theorem~\ref{thm:monotone-super-ch} goes along the
lines discussed in Section~\ref{sec:nonincrease-superch} and is depicted in
Figure~\ref{fig:nonincrease-super-ch}. The main idea behind this operational
proof is that the strategy depicted in Figure~\ref{fig:nonincrease-super-ch}
is a particular strategy for coherent channel discrimination of the channels
$\mathcal{N}_{A\rightarrow B}^{0}$ and $\mathcal{N}_{A\rightarrow B}^{1}$, but
it is a general strategy for coherent channel discrimination of the channels
$\Theta(\mathcal{N}_{A\rightarrow B}^{0})$ and $\Theta(\mathcal{N}%
_{A\rightarrow B}^{1})$, where $\Theta_{\left(  A\rightarrow B\right)
\rightarrow\left(  C\rightarrow D\right)  }$ is a quantum superchannel. Let
$W_{AG\rightarrow BE}^{i}$ be a unitary extension of the channel
$\mathcal{N}_{A\rightarrow B}^{i}$, so that%
\begin{multline}
\mathcal{N}_{A\rightarrow B}^{i}(\cdot)=\\
\operatorname{Tr}_{E}[W_{AG\rightarrow BE}^{i}((\cdot)\otimes|0\rangle
\langle0|_{E})(W_{AG\rightarrow BE}^{i})^{\dag}].
\end{multline}
Let $\Theta_{\left(  A\rightarrow B\right)  \rightarrow\left(  C\rightarrow
D\right)  }$ be a quantum superchannel, and by the fundamental theorem of
superchannels, it has a physical realization as in \eqref{eq:FTOSCs}. Let
$W_{CL\rightarrow AMM_{1}}^{\mathcal{E}}$ be a unitary extension of the
channel $\mathcal{E}_{A\rightarrow BM}$, and let $W_{BMG^{\prime}\rightarrow
DM_{2}}^{\mathcal{D}}$ be a unitary extension of the channel $\mathcal{D}%
_{BM\rightarrow D}$, so that%
\begin{multline}
\mathcal{E}_{A\rightarrow BM}(\cdot)=\\
\operatorname{Tr}_{M_{1}}[W_{CL\rightarrow AMM_{1}}^{\mathcal{E}}%
((\cdot)\otimes|0\rangle\langle0|_{L})(W_{CL\rightarrow AMM_{1}}^{\mathcal{E}%
})^{\dag}].
\end{multline}
and%
\begin{multline}
\mathcal{D}_{BM\rightarrow D}(\cdot)=\\
\operatorname{Tr}_{M_{2}}[W_{BMG^{\prime}\rightarrow DM_{2}}^{\mathcal{D}%
}((\cdot)\otimes|0\rangle\langle0|_{G^{\prime}})(W_{BMG^{\prime}\rightarrow
DM_{2}}^{\mathcal{D}})^{\dag}].
\end{multline}
From \eqref{eq:pf-simpler-exp-1}, we know that the success probability for
coherent channel discrimination of $\Theta(\mathcal{N}_{A\rightarrow B}^{0})$
and $\Theta(\mathcal{N}_{A\rightarrow B}^{1})$, for a fixed strategy $\left\{
|\psi\rangle_{RC},V_{RDFE^{\prime}}\right\}  $, is given by%
\begin{equation}
\frac{1}{4}\left\Vert \sum_{i\in\left\{  0,1\right\}  }\langle i|_{F}T^{i\dag
}VT^{i}|\psi\rangle_{RC}|00\rangle_{E^{\prime}F}\right\Vert _{2}%
^{2},\label{eq:succ-prob-coh-with-superch-1}%
\end{equation}
where for $i\in\left\{  0,1\right\}  $,%
\begin{multline}
T_{C\rightarrow DM_{1}M_{2}E}^{i}:=
\\
W_{BMG^{\prime}\rightarrow DM_{2}%
}^{\mathcal{D}}W_{AG\rightarrow BE}^{i}W_{CL\rightarrow AMM_{1}}^{\mathcal{E}%
}|0\rangle_{L}|0\rangle_{G}|0\rangle_{G^{\prime}}.
\end{multline}
This follows because $T_{C\rightarrow DM_{1}M_{2}E}^{i}$ is an isometric
extension of the channel $\Theta(\mathcal{N}_{A\rightarrow B}^{i})$ for
$i\in\left\{  0,1\right\}  $. Since the Euclidean norm is non-increasing with
respect to the projections onto $|0\rangle_{L}$ and $|0\rangle_{G^{\prime}}$,
we find that%
\begin{align}
& \text{Eq.~\eqref{eq:succ-prob-coh-with-superch-1}}\notag \\
& \leq\frac{1}{4}\left\Vert \sum_{i\in\left\{  0,1\right\}  }\langle
i|_{F}\langle0|_{G}W^{\mathcal{E\dag}}W^{i\dag}W^{\mathcal{D\dag}}VT^{i}%
|\psi\rangle_{RC}|00\rangle_{E^{\prime}F}\right\Vert _{2}^{2} \notag \\
& =\frac{1}{4}\left\Vert \sum_{i\in\left\{  0,1\right\}  }\langle
i|_{F}\langle0|_{G}W^{i\dag}W^{\mathcal{D\dag}}VT^{i}|\psi\rangle
_{RC}|00\rangle_{E^{\prime}F}\right\Vert _{2}^{2}%
,\label{eq:super-ch-proof-oper-final}%
\end{align}
where for the last equality we used that $W_{CL\rightarrow AMM_{1}%
}^{\mathcal{E}}$ is unitary and not acting on any of the systems being
projected out. Now we observe that $W_{CL\rightarrow AMM_{1}}^{\mathcal{E}%
}|\psi\rangle_{RC}|0\rangle_{L}|0\rangle_{G^{\prime}}$ is a particular pure
state that the prover can use for coherent channel discrimination of
$\mathcal{N}_{A\rightarrow B}^{0}$ and $\mathcal{N}_{A\rightarrow B}^{1}$, and
$(W_{BMG^{\prime}\rightarrow DM_{2}}^{\mathcal{D}})^{\dag}V_{RDFE^{\prime}%
}W_{BMG^{\prime}\rightarrow DM_{2}}^{\mathcal{D}}$ is a particular unitary
that the prover can use for the same purpose. So we conclude that%
\begin{equation}
\text{Eq.}~\eqref{eq:super-ch-proof-oper-final}\leq p_{s}^{\text{coh}%
}(\mathcal{N}^{0},\mathcal{N}^{1}).
\end{equation}
Since the strategy employed for distinguishing $\Theta(\mathcal{N}%
_{A\rightarrow B}^{0})$ and $\Theta(\mathcal{N}_{A\rightarrow B}^{1})$ is
arbitrary, we conclude the operational proof of
Theorem~\ref{thm:monotone-super-ch}.

\section{Proof of Semi-definite programming formulation in
Eq.~\eqref{eq:SDP-succ-prob-coh-disc}}

Here I establish the particular form of the success probability in
\eqref{eq:SDP-succ-prob-coh-disc}, which demonstrates that
\eqref{eq:succ-prob-coh} can be calculated by means of a semi-definite
program. As stated previously, this follows from the fact that coherent
channel discrimination is a quantum interactive proof system, and the
acceptance probability of any quantum interactive proof system can be
calculated via a semi-definite program \cite{KW00,VW16}. In particular,
coherent channel discrimination is a three-message quantum interactive proof
system. Recall from \cite[Section~4.3]{VW16} that a three-message interactive
proof system is specified by two linear isometries $T_{Y_{0}\rightarrow
Z_{1}X_{1}}^{1}$ and $T_{Z_{1}Y_{1}\rightarrow Z_{2}}^{2}$ for the initial
circuit of the verifier and the final circuit of the verifier before the
measurement, respectively (in particular, see \cite[Figure~4.5]{VW16}). Then
the semi-definite program for the acceptance probability is given by
\cite[Figure~4.6]{VW16} as%
\begin{equation}
\sup\operatorname{Tr}[T^{2\dag}\Pi T^{2}\sigma_{1}],
\end{equation}
subject to%
\begin{align}
\operatorname{Tr}[\sigma_{0}]  &  =1,\\
\operatorname{Tr}_{Y_{1}}[\sigma_{1}]  &  =\operatorname{Tr}_{X_{1}}%
[T^{1}\sigma_{0}T^{1\dag}],
\end{align}
where $\Pi$ is the projection onto the accepting subspace and $\sigma_{i}$ is
positive semi-definite for $i\in\left\{  0,1\right\}  $. The constraints
correspond to the fact that the initial reduced state $\sigma_{0}$ of the
prover should be a density operator and that the reduced state of the verifier
on system $Z_{1}$ should be the same before and after the prover acts.

In our case, the initial reduced state $\sigma_{0}$ is on system $A$ and so we
can call it $\rho_{A}$. The isometry $T_{Y_{0}\rightarrow Z_{1}X_{1}}^{1}$
corresponds to the action in \eqref{eq:state-verifier-after-1st-unitary}:%
\begin{equation}
|\varphi\rangle_{A}\rightarrow\frac{1}{\sqrt{2}}\sum_{i\in\left\{
0,1\right\}  }|i\rangle_{R_{1}}U_{A\rightarrow BE}^{i}|\varphi\rangle_{A},
\end{equation}
where $U_{A\rightarrow BE}^{i}=W_{AG\rightarrow BE}^{i}|0\rangle_{G}$ and
system $Y_{0}$ is $A$, $Z_{1}$ is $R_{1}E$, and $X_{1}$ is $B$. The right-hand
side above can be rewritten using Kraus operators for channel $\mathcal{N}%
_{A\rightarrow B}^{i}$ as%
\begin{equation}
\frac{1}{\sqrt{2}}\sum_{i\in\left\{  0,1\right\}  ,k}|i\rangle_{R_{1}}%
N_{k}^{i}|\varphi\rangle_{A}|k\rangle_{E}.
\end{equation}
So we find that%
\begin{align}
&  \operatorname{Tr}_{X_{1}}[T^{1}\sigma_{0}T^{1\dag}]\nonumber\\
&  =\frac{1}{2}\operatorname{Tr}_{B}\left[  \sum_{i,j\in\left\{  0,1\right\}
,k,\ell}|i\rangle\langle j|_{R_{1}}\otimes N_{k}^{i}\rho_{A}N_{\ell}^{j\dag
}\otimes|k\rangle\langle\ell|_{E}\right] \nonumber\\
&  =\frac{1}{2}\sum_{i,j\in\left\{  0,1\right\}  ,k,\ell}\operatorname{Tr}%
[N_{k}^{i}\rho_{A}N_{\ell}^{j\dag}]|i\rangle\langle j|_{R_{1}}\otimes
|k\rangle\langle\ell|_{E}\nonumber\\
&  =\frac{1}{2}\sum_{i,j\in\left\{  0,1\right\}  ,k,\ell}\operatorname{Tr}%
[N_{\ell}^{j\dag}N_{k}^{i}\rho_{A}]|i\rangle\langle j|_{R_{1}}\otimes
|k\rangle\langle\ell|_{E}\nonumber\\
&  =Z_{R_{1}E}^{\rho},
\end{align}
as defined after \eqref{eq:SDP-succ-prob-coh-disc}. The state $\sigma_{1}$ of
the prover on systems $Z_{1}$ and $Y_{1}$ is denoted by $\sigma_{R_{1}EFB}$,
with $Z_{1}$ being $R_{1}E$ and $Y_{1}$ being $FB$. The isometry
$T_{Z_{1}Y_1 \rightarrow Z_{2}}^{2}$ corresponds to the action in
\eqref{eq:inverse-unitary} (the inverse controlled unitary), with $Z_{2}$ being $R_{1}AGF$. The projection $\Pi$ onto the
accepting subspace is $\Phi_{R_{1}F}\otimes I_{A}\otimes|0\rangle\langle
0|_{G}$. So we find that%
\begin{align}
&  T^{2\dag}\Pi T^{2}\nonumber\\
&  =\left(  \sum_{i\in\left\{  0,1\right\}  }|i\rangle\langle i|_{R_{1}%
}\otimes W_{AG\rightarrow BE}^{i}\right)  \Phi_{R_{1}F}\otimes I_{A}%
\otimes|0\rangle\langle0|_{G}\nonumber\\
&  \qquad\times\left(  \sum_{j\in\left\{  0,1\right\}  }|j\rangle\langle
j|_{R_{1}}\otimes(W_{AG\rightarrow BE}^{j})^{\dag}\right) \nonumber\\
&  =\sum_{i,j\in\left\{  0,1\right\}  }|i\rangle\langle i|_{R_{1}}\Phi
_{R_{1}F}|j\rangle\langle j|_{R_{1}}\otimes N_{k}^{i}N_{\ell}^{j\dag}%
\otimes|k\rangle\langle\ell|_{E}\nonumber\\
&  =\frac{1}{2}\sum_{i,j\in\left\{  0,1\right\}  }|ii\rangle\langle
jj|_{R_{1}F}\otimes N_{k}^{i}N_{\ell}^{j\dag}\otimes|k\rangle\langle\ell
|_{E}\nonumber\\
&  =Y_{R_{1}FBE},%
\end{align}
as defined after \eqref{eq:SDP-succ-prob-coh-disc}. This concludes the proof
of \eqref{eq:SDP-succ-prob-coh-disc}.

The dual program in \eqref{eq:dual-SDP-succ-prob-coh}\ follows from standard
techniques (Lagrange multiplier method) or by plugging into \cite[Figure~4.7]%
{VW16}.

\section{Proof of bounds relating success probabilities in coherent and
incoherent channel discrimination}

\label{app:bnds-proofs-coh-inc-incunc}

This appendix establishes a proof of the bounds in
\eqref{eq:bounds-coh-inc-incunc}. Let us begin by establishing an expression
for $p_{s}^{\text{inc,unc}}(\mathcal{N}^{0},\mathcal{N}^{1})$:

\begin{proposition}
Let $\mathcal{N}_{A\rightarrow B}^{0}$ and $\mathcal{N}_{A\rightarrow B}^{1}$
be quantum channels. Then the success probability of incoherent channel
discrimination with uncomputing can be written as%
\begin{multline}
p_{s}^{\text{inc,unc}}(\mathcal{N}^{0},\mathcal{N}^{1}%
)=\label{eq:succ-prob-incunc-exp}\\
\frac{1}{2}\sup_{_{\substack{\left\{  P^{i}\right\}  _{i\in\left\{
0,1\right\}  }:\\\sum_{i}P^{i\dag}P^{i}=I_{RB}}}}\left\Vert \sum_{i\in\left\{
0,1\right\}  }\mathcal{N}^{i\dag}(P^{i\dag})\mathcal{N}^{i\dag}(P^{i}%
)\right\Vert _{\infty},
\end{multline}
where the operators are written without abbreviation as $P_{RB\rightarrow
RBE^{\prime}}^{i}$.
\end{proposition}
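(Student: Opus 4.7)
The plan is to parallel Appendix~\ref{app:proof-of-prop-succ-prob-coh} step by step, with the only changes being that the verifier's initial qubit on $R_{1}$ is now the maximally mixed state $\pi_{R_{1}}=\tfrac{1}{2}(|0\rangle\langle 0|+|1\rangle\langle 1|)_{R_{1}}$ rather than $|+\rangle\langle+|_{R_{1}}$, and the $R_{1}F$ part of the final projective measurement is replaced by the incoherent projectors in~\eqref{eq:final-meas-incoh-ch-disc} (with the $|0\rangle\langle 0|_{G}$ uncomputing check on the environment retained, as is needed for the claimed formula to come out). First I would track the global state through the protocol: since $\pi_{R_{1}}$ is diagonal in the computational basis and the controlled unitary, the prover's unitary $V_{RBE'F}$, and the inverse controlled unitary all preserve this block structure on $R_{1}$, the state just before the verifier's measurement is the convex combination
\[
\tfrac{1}{2}\sum_{i\in\{0,1\}}|i\rangle\langle i|_{R_{1}}\otimes\bigl(W^{i\dag}VW^{i}|\psi\rangle_{RA}|000\rangle_{GE'F}\bigr)\bigl(\cdots\bigr)^{\dag}.
\]
The incoherent measurement combined with $|0\rangle\langle 0|_{G}$ then selects only the diagonal pieces $|ii\rangle\langle ii|_{R_{1}F}\otimes|0\rangle\langle 0|_{G}$, producing an unoptimized success probability of $\tfrac{1}{2}\sum_{i}\bigl\|\langle i|_{F}\langle 0|_{G}W^{i\dag}VW^{i}|\psi\rangle_{RA}|000\rangle_{GE'F}\bigr\|_{2}^{2}$.

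Next I would invoke the two reductions already used in Appendix~\ref{app:proof-of-prop-succ-prob-coh}: the Stinespring relation $\langle 0|_{G}W^{i\dag}(\cdot)W^{i}|0\rangle_{G}=\mathcal{N}^{i\dag}(\cdot)$ turns each isometric sandwich into a channel adjoint, and the block decomposition~\eqref{eq:V-decomp} of $V_{RBE'F}$ together with the definition~\eqref{eq:Pi-def} $P^{i}:=Q^{i,0}_{RBE'}|0\rangle_{E'}$ yields the per-branch simplification $\langle i|_{F}\mathcal{N}^{i\dag}(V)|\psi\rangle_{RA}|00\rangle_{E'F}=\mathcal{N}^{i\dag}(P^{i})|\psi\rangle_{RA}$, exactly as in~\eqref{eq:helper-calc-overlap-Q}. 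The unitarity constraints on $V$ reduce to $\sum_{i}P^{i\dag}P^{i}=I_{RB}$, and conversely any such pair $\{P^{i}\}$ extends to some unitary $V_{RBE'F}$, so the supremum over $V$ is the same as the supremum over constrained $\{P^{i}\}$.

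Finally, combining the two branches and using $(\mathcal{N}^{i\dag}(P^{i}))^{\dag}=\mathcal{N}^{i\dag}(P^{i\dag})$ (since an adjoint channel preserves the Hermitian adjoint), the unoptimized success probability becomes
\[
\tfrac{1}{2}\langle\psi|_{RA}\Bigl[\sum_{i\in\{0,1\}}\mathcal{N}^{i\dag}(P^{i\dag})\mathcal{N}^{i\dag}(P^{i})\Bigr]|\psi\rangle_{RA}.
\]
The bracketed operator is positive semi-definite (a sum of terms of the form $X^{\dag}X$), so supremizing over unit vectors $|\psi\rangle_{RA}$ yields its $\infty$-norm, giving exactly~\eqref{eq:succ-prob-incunc-exp}. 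I do not anticipate a serious obstacle: the one conceptual point to highlight is that the classical mixture on $R_{1}$ precisely kills the cross-terms $i\neq j$ present in the coherent calculation, so that a sum of squared norms appears where Proposition~\ref{prop:coh-ch-disc-suc-prob-derive} has the squared norm of a sum, and this sum of squared norms rewrites as the $\infty$-norm of a positive semi-definite operator built from products $\mathcal{N}^{i\dag}(P^{i\dag})\mathcal{N}^{i\dag}(P^{i})$ rather than from the single sum $\sum_{i}\mathcal{N}^{i\dag}(P^{i})$ sitting inside the norm.
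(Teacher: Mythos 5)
Your proposal is correct and follows essentially the same route as the paper's proof: the paper purifies the verifier's maximally mixed state with $|\Phi\rangle_{R_{2}R_{1}}$ to keep a pure-state norm calculation, whereas you track the block-diagonal density operator directly, but both yield the same sum of squared norms $\tfrac{1}{2}\sum_{i}\|\mathcal{N}^{i\dag}(P^{i})|\psi\rangle_{RA}\|_{2}^{2}$ via \eqref{eq:helper-calc-overlap-Q} and \eqref{eq:Pi-def}, and then the $\infty$-norm upon optimizing over $|\psi\rangle_{RA}$. Your observation that the retained $\langle 0|_{G}$ check is what makes the formula come out matches the paper's implicit use of $U^{i\dag}VU^{i}=\langle 0|_{G}W^{i\dag}VW^{i}|0\rangle_{G}$.
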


\begin{proof}
The analysis is similar to that given in
Appendix~\ref{app:proof-of-prop-succ-prob-coh}. Considering that the initial
state of the verifier, the maximally mixed state, can be purified by the
maximally entangled state $|\Phi\rangle_{R_{2}R_{1}}$ and by running through a
calculation similar to that given in
\eqref{eq:controlled-unitary}--\eqref{eq:unopt-succ-prob}, we find that the
unoptimized success probability for a fixed strategy of the prover is equal to%
\begin{multline}
\frac{1}{2}\left\Vert \Pi_{R_{1}F}\sum_{i\in\left\{  0,1\right\}  }%
|i\rangle_{R_{2}}|i\rangle_{R_{1}}U^{i\dag}VU^{i}|\psi\rangle_{RA}%
|00\rangle_{E^{\prime}F}\right\Vert _{2}^{2}%
\label{eq:succ-prob-incunc-unopt-1}\\
=\frac{1}{2}\left\Vert \Pi_{R_{1}F}\sum_{i\in\left\{  0,1\right\}  }%
|i\rangle_{R_{2}}|i\rangle_{R_{1}}\mathcal{N}^{i\dag}(V)|\psi\rangle
_{RA}|00\rangle_{E^{\prime}F}\right\Vert _{2}^{2}%
\end{multline}
where%
\begin{equation}
\Pi_{R_{1}F}:=\sum_{j\in\left\{  0,1\right\}  }|jj\rangle\langle jj|_{R_{1}F}.
\end{equation}
This leads to%
\begin{align}
&  \text{Eq.}~\eqref{eq:succ-prob-incunc-unopt-1}\nonumber\\
&  =\frac{1}{2}\left\Vert \sum_{i\in\left\{  0,1\right\}  }|iii\rangle
_{R_{2}R_{1}F}\langle i|_{F}\mathcal{N}^{i\dag}(V)|\psi\rangle_{RA}%
|00\rangle_{E^{\prime}F}\right\Vert _{2}^{2}\\
&  =\frac{1}{2}\left\Vert \sum_{i\in\left\{  0,1\right\}  }|iii\rangle
_{R_{2}R_{1}F}\mathcal{N}^{i\dag}(P_{RB\rightarrow RBE^{\prime}}^{i}%
)|\psi\rangle_{RA}\right\Vert _{2}^{2}\\
&  =\frac{1}{2}\sum_{i\in\left\{  0,1\right\}  }\langle\psi|_{RA}%
\mathcal{N}^{i\dag}((P^{i})^{\dag})\mathcal{N}^{i\dag}(P^{i})|\psi\rangle
_{RA},
\end{align}
where in the second line we made use of \eqref{eq:helper-calc-overlap-Q} and
\eqref{eq:Pi-def}, and the last line follows by direct evaluation of the norm.
Now optimizing over all strategies of the prover and employing the definition
of the operator norm, we conclude \eqref{eq:succ-prob-incunc-exp}.
\end{proof}

\bigskip

We can now establish \eqref{eq:bounds-coh-inc-incunc}. Let us start by
proving%
\begin{equation}
p_{s}^{\text{coh}}(\mathcal{N}^{0},\mathcal{N}^{1})\leq p_{s}^{\text{inc,unc}%
}(\mathcal{N}^{0},\mathcal{N}^{1}). \label{eq:1st-ineq-bounds}%
\end{equation}
Starting from \eqref{eq:coh-ch-disc-suc-prob-derive}, let $\left\{
P_{RB\rightarrow RBE^{\prime}}^{i}\right\}  _{i\in\left\{  0,1\right\}  }$ be
arbitrary operators satisfying $\sum_{i}P^{i\dag}P^{i}=I_{RB}$. Then%
\begin{align}
&  \frac{1}{4}\left\Vert \sum_{i\in\left\{  0,1\right\}  }\mathcal{N}^{i\dag
}(P^{i})\right\Vert _{\infty}^{2}\\
&  =\frac{1}{4}\left\Vert \left(  \sum_{j\in\left\{  0,1\right\}  }%
\mathcal{N}^{j\dag}(P^{j\dag})\right)  \left(  \sum_{i\in\left\{  0,1\right\}
}\mathcal{N}^{i\dag}(P^{i})\right)  \right\Vert _{\infty}\\
&  \leq\frac{1}{2}\left\Vert \sum_{i\in\left\{  0,1\right\}  }\mathcal{N}%
^{i\dag}(P^{i\dag})  \mathcal{N}^{i\dag}(P^{i})  \right\Vert
_{\infty}\\
&  \leq p_{s}^{\text{inc,unc}}(\mathcal{N}^{0},\mathcal{N}^{1})
\end{align}
The equality follows because $\left\Vert X\right\Vert _{\infty}^{2}=\left\Vert
X^{\dag}X\right\Vert _{\infty}$ for any operator $X$. The first inequality
follows from the same reasoning as in \eqref{eq:proj-ineq-basic}. Since
$\left\{  P_{RB\rightarrow RBE^{\prime}}^{i}\right\}  _{i\in\left\{
0,1\right\}  }$ satisfying $\sum_{i}P^{i\dag}P^{i}=I_{RB}$ is arbitrary, we
conclude \eqref{eq:1st-ineq-bounds}.

Now let us prove that%
\begin{equation}
p_{s}^{\text{inc,unc}}(\mathcal{N}^{0},\mathcal{N}^{1})\leq p_{s}^{\text{inc}%
}(\mathcal{N}^{0},\mathcal{N}^{1}).
\end{equation}
Let $\left\{  P_{RB\rightarrow RBE^{\prime}}^{i}\right\}  _{i\in\left\{
0,1\right\}  }$ be arbitrary operators satisfying $\sum_{i}P^{i\dag}%
P^{i}=I_{RB}$. Then consider that%
\begin{equation}
\sum_{i\in\left\{  0,1\right\}  }\mathcal{N}^{i\dag}(P^{i\dag})
\mathcal{N}^{i\dag}(P^{i})  \leq\sum_{i\in\left\{  0,1\right\}
}\mathcal{N}^{i\dag}(P^{i\dag}P^{i}),
\end{equation}
as a direct consequence of the Kadison--Schwarz inequality (see \cite[Exercise~6.7]{H12book}). Then set
$\Lambda_{RB}^{i}=P^{i\dag}P^{i}$ and these operators satisfy $\Lambda
_{RB}^{i}\geq0$ for $i\in\left\{  0,1\right\}  $ and $\sum_{i\in\left\{
0,1\right\}  }\Lambda_{RB}^{i}=I_{RB}$. Thus,%
\begin{align}
&  \frac{1}{2}\left\Vert \sum_{i\in\left\{  0,1\right\}  }\mathcal{N}^{i\dag
}(P^{i\dag})  \mathcal{N}^{i\dag}(P^{i})  \right\Vert _{\infty
}\nonumber\\
&  \leq\frac{1}{2}\left\Vert \sum_{i\in\left\{  0,1\right\}  }\mathcal{N}%
^{i\dag}(P^{i\dag}P^{i})\right\Vert _{\infty}\\
&  \leq p_{s}^{\text{inc}}(\mathcal{N}^{0},\mathcal{N}^{1}),
\end{align}
where in the last line we exploit \eqref{eq:incoh-suc-prob-compare}.

Let us finally establish
\begin{equation}
p_{s}^{\text{inc}}(\mathcal{N}^{0},\mathcal{N}^{1})\leq\sqrt{p_{s}%
^{\text{coh}}(\mathcal{N}^{0},\mathcal{N}^{1})}.
\label{eq:last-ineq-prob-s-bnds}%
\end{equation}
By picking $P_{RB\rightarrow RBE^{\prime}}^{i}=\sqrt{\Lambda_{RB}^{i}}%
\otimes|0\rangle_{E^{\prime}}$, where $\Lambda_{RB}^{i}\geq0$ for
$i\in\left\{  0,1\right\}  $ and $\sum_{i\in\left\{  0,1\right\}  }%
\Lambda_{RB}^{i}=I_{RB}$, and exploiting
\eqref{eq:coh-ch-disc-suc-prob-derive}, we find that%
\begin{align}
&  2\sqrt{p_{s}^{\text{coh}}(\mathcal{N}^{0},\mathcal{N}^{1})}\nonumber\\
&  =\sup_{\substack{\left\{  P^{i}\right\}  _{i\in\left\{  0,1\right\}
}:\\\sum_{i}P^{i\dag}P^{i}=I_{RB}}}\left\Vert \sum_{i\in\left\{  0,1\right\}
}(\mathcal{N}_{A\rightarrow B}^{i})^{\dag}(P_{RB\rightarrow RBE^{\prime}}%
^{i})\right\Vert _{\infty}\\
&  \geq\left\Vert \sum_{i\in\left\{  0,1\right\}  }(\mathcal{N}_{A\rightarrow
B}^{i})^{\dag}\left(\sqrt{\Lambda_{RB}^{i}}\otimes|0\rangle_{E^{\prime}%
}\right)\right\Vert _{\infty}\\
&  =\left\Vert \sum_{i\in\left\{  0,1\right\}  }(\mathcal{N}_{A\rightarrow
B}^{i})^{\dag}\left(\sqrt{\Lambda_{RB}^{i}}\right)\right\Vert _{\infty}\\
&  \geq\left\Vert \sum_{i\in\left\{  0,1\right\}  }(\mathcal{N}_{A\rightarrow
B}^{i})^{\dag}(\Lambda_{RB}^{i})\right\Vert _{\infty},
\end{align}
where the last inequality follows because $\sqrt{\Lambda} \geq \Lambda$ for $0 \leq \Lambda \leq I$.
Since $\{\Lambda_{RB}^{i}\}_{i\in\left\{  0,1\right\}  }$ is arbitrary, we
conclude \eqref{eq:last-ineq-prob-s-bnds}\ after making use of \eqref{eq:incoh-suc-prob-compare}.

\section{Comparison of coherent and incoherent channel discrimination for generalized amplitude damping channels}

\begin{figure}[ptb]
\begin{center}
\includegraphics[
width=3.3399in
]{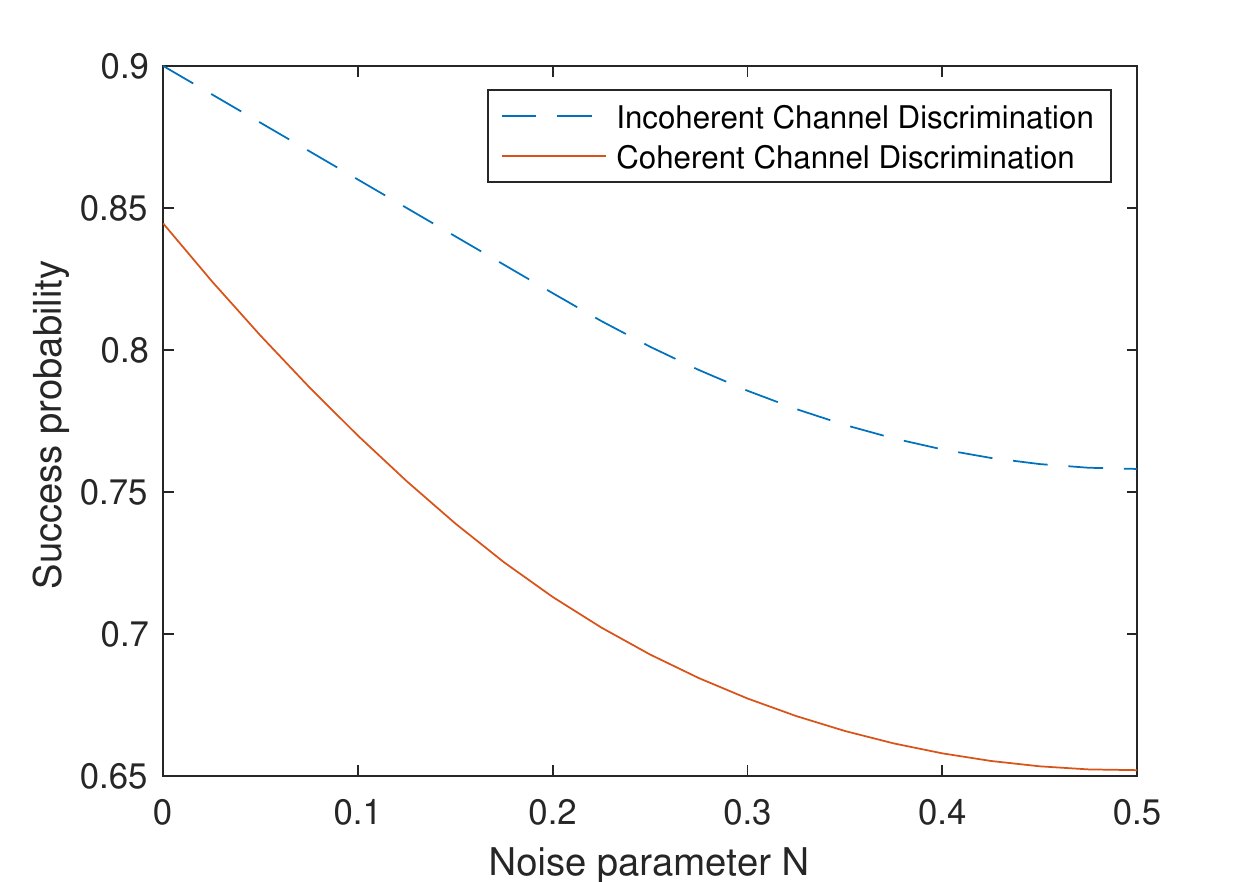}
\end{center}
\caption{Comparison of the success probabilities of coherent and incoherent channel discrimination for a generalized amplitude damping channel with  damping parameter $\gamma=0.1$ and another with damping parameter $\gamma=0.9$. The channels have the same value of the noise parameter $N$, which is varied in the plot.}%
\label{fig:ch-compare-1}%
\end{figure}

\begin{figure}[ptb]
\begin{center}
\includegraphics[
width=3.3399in
]{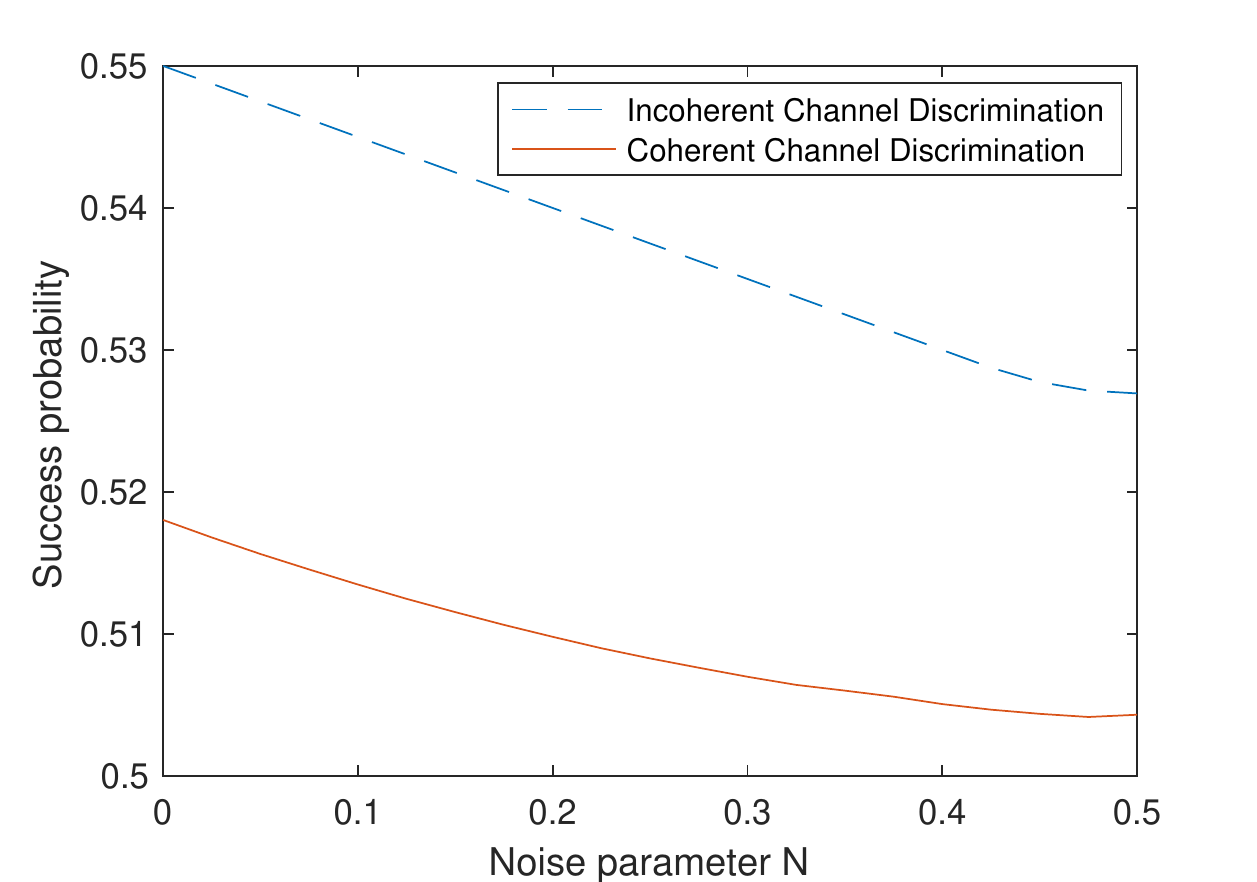}
\end{center}
\caption{Comparison of the success probabilities of coherent and incoherent channel discrimination for a generalized amplitude damping channel with  damping parameter $\gamma=0.2$ and another with damping parameter $\gamma=0.3$. The channels have the same value of the noise parameter $N$, which is varied in the plot.}%
\label{fig:ch-compare-2}%
\end{figure}

In this final appendix, I perform a comparison of the success probability of coherent and incoherent channel discrimination for generalized amplitude damping channels. The generalized amplitude damping channel is a simple model of relaxation and thermal noise that can affect a qubit \cite{book2000mikeandike}. It is governed by a damping parameter $\gamma \in [0,1]$ and a noise parameter $N \in[0,1]$. When the noise parameter $N=0$, it reduces to the standard amplitude damping channel. It is defined by the following four Kraus operators \cite{book2000mikeandike}:
\begin{align}
&\sqrt{1-N}\left(|0\rangle \langle 0| + \sqrt{1-\gamma} |1\rangle \langle 1|\right), \\
&\sqrt{\gamma(1-N)}|0\rangle \langle 1|, \\
&\sqrt{N}\left(\sqrt{1-\gamma}|0\rangle \langle 0| +  |1\rangle \langle 1|\right), \\
&\sqrt{\gamma N}|1\rangle \langle 0| .
\end{align}

Using these Kraus operators and the semi-definite programming formulation of the success probability of coherent channel discrimination from \eqref{eq:SDP-succ-prob-coh-disc}, we can calculate it for generalized amplitude damping channels (Matlab files for doing so are available with the arXiv posting of this paper). We can also calculate the success probability of incoherent channel discrimination of the same channels by using the semi-definite programming formulation of the diamond distance from \cite{Wat09} and combining with \eqref{eq:incoh-succ-prob-ch-disc}.

Figures~\ref{fig:ch-compare-1} and \ref{fig:ch-compare-2} compare the success probabilities of coherent and incoherent channel discrimination for generalized amplitude damping channels with different values of the damping parameter $\gamma$ and the noise parameter $N$.

\end{document}